\theoremstyle{plain}
\title{Energy-efficient Delivery by Heterogeneous Mobile Agents\footnote{This work was partially supported by the project ANR-ANCOR (anr-14-CE36-0002-01), the SNF (project 200021L\_156620)
and the DFG Priority Programme 1736 “Algorithms for Big Data” (grant
SK 58/10-1).}}
\author[1]{Andreas Bärtschi}
\author[2]{Jérémie Chalopin}
\author[2]{Shantanu Das}
\author[3]{Yann Disser}
\author[1]{Daniel Graf}
\author[4]{Jan Hackfeld}
\author[1]{Paolo Penna}
\affil[1]{
	\small ETH Z\"{u}rich, Department of Computer Science, Switzerland\\
	\texttt{$\left\{\right.$baertschi,daniel.graf,paolo.penna$\left.\right\}$@inf.ethz.ch}
}
\affil[2]{
	\small LIF, CNRS and Aix-Marseille Université, France\\
  	\texttt{$\left\{\right.$shantanu.das,jeremie.chalopin$\left.\right\}$@lif.univ-mrs.fr}
}
\affil[3]{
	TU Darmstadt, 
  	\texttt{disser@mathematik.tu-darmstadt.de}
}
\affil[4]{
	TU Berlin, Institut für Mathematik, Technische Universität Berlin, Germany\\
  	\texttt{hackfeld@math.tu-berlin.de}
}
\authorrunning{Bärtschi, Chalopin, Das, Disser, Graf, Hackfeld and Penna}
\subjclass{F.2 Analysis of Algorithms and Problem Complexity, F.2.2 Nonnumerical Algorithms and Problems}
\keywords{message delivery, mobile agents, energy optimization, approximation algorithms}
\newif\ifProofsAppendix
\def\CatchFBT@Fin@l#1[#2]{%
   \begingroup
      \makeatletter #2%
      \scantokens\expandafter{%
         \expandafter\CatchFBT@tok\expandafter{\the\CatchFBT@tok}}%
      \CatchFBT@IsAToken{#1}
         {\global#1\expandafter{\the\CatchFBT@tok}}
         {\xdef#1{\the\CatchFBT@tok}}%
      \ifx\CatchFBT@tok#1\else\global\CatchFBT@tok{}\fi
   \endgroup
}
\newcommand{\ourProblem}{\textsc{WeightedDelivery}\xspace}
\newcommand{\NP}{$\mathrm{NP}$}
\newcommand{\cost}{\textsc{cost}}
\newcommand{\ALG}{\textsc{Alg}}
\newcommand{\OPT}{\textsc{Opt}}
\newcommand{\SAT}{\textsc{Sol}}
\newcommand{\BoC}{\textsc{BoC}}
\providecommand{\dotcup}{\mathbin{\dot{\cup}}}
\newcommand{\drop}[1]{ }
\begin{document}

\maketitle

\begin{abstract}
	We consider the problem of delivering $m$ messages between specified source-target pairs in a weighted undirected graph, by $k$ mobile agents initially located at distinct nodes of the graph. 
	Each agent consumes energy proportional to the distance it travels in the graph and we are interested in optimizing the total energy consumption for the team of agents.
	Unlike previous related work, we consider heterogeneous agents with different rates of energy consumption (weights~$w_i$).
	To solve the delivery problem, agents face three major challenges: 
	\emph{Collaboration} (how to work together on each message), 
	\emph{Planning} (which route to take) and 
	\emph{Coordination} (how to assign agents to messages).

	We first show that the delivery problem can be 2-approximated \emph{without} collaborating and that this is best possible, i.e., we show that the \emph{benefit of collaboration} is 2 in general.
	We also show that the benefit of collaboration for a single message is~$1/\ln 2 \approx 1.44$.
	Planning turns out to be \NP-hard to approximate even for a single agent, but can be 2-approximated in polynomial time if agents have unit capacities and do not collaborate.
	We further show that coordination is \NP-hard even for agents with unit capacity, but can be efficiently solved exactly if they have uniform weights.	
	Finally, we give a polynomial-time $(4\max\tfrac{w_i}{w_j})$-approximation for message delivery with unit capacities.

\end{abstract}

\section{Introduction}
\label{sec:introduction}

Recent technological progress in robotics allows the mass production of inexpensive mobile robots which can be used to perform a variety of tasks autonomously without the need for human intervention. This gives rise to a variety of algorithmic problems for teams of autonomous robots, hereafter called \emph{mobile agents}. We consider here the delivery problem of moving some objects or messages between various locations. A mobile agent corresponds to an automated vehicle that can pick up a message at its source and deliver it to the intended destination. In doing so, the agent consumes energy proportional to the distance it travels. We are interested in energy-efficient operations by the team of agents such that the total energy consumed is minimized.

In general the agents may not be all identical; some may be more energy efficient than others if they use different technologies or different sources of power. We assume each agent has a given \emph{weight} which is the rate of energy consumption per unit distance traveled by this agent. Moreover, the agents may start from distinct locations. Thus it may be sometimes efficient for a agent to carry the message to some intermediate location and hand it over to another agent which carries it further towards the destination. On the other hand, an agent may carry several messages at the same time. Finding an optimal solution that minimizes the total energy cost involves scheduling the moves of the agents and the points where they pick up or handover the messages. 
We study this problem (called \ourProblem) for an edge-weighted graph $G$ which connects all sources and destinations. The objective is to deliver $m$ messages between specific source-target pairs using $k$ agents located in arbitrary nodes $G$.
Note that this problem is distinct from the connectivity problems on graphs or network flow problems since the initial  location of the agents are in general different from the sources where the messages are located, which means we need to consider the cost of moving the agents to the sources in addition to the cost of moving the messages. Furthermore, there is no one-to-one correspondence between the agents and the messages in our problem. 

Previous approaches to energy-efficient delivery of messages by agents have focused on a
bottleneck where the agents have limited energy (battery power) which restricts their movements~\cite{AnayaCCLPV16, DDalgosensors13}. The decision problem of whether a single message can be delivered without exceeding the available energy for any agent is known as the DataDelivery problem~\cite{DDicalp14} or the BudgetedDelivery problem~\cite{sirocco16} and it was shown to be weakly \NP-hard on paths~\cite{DDicalp14} and strongly \NP-hard on planar graphs~\cite{sirocco16}.

\subparagraph{Our Model.} 
We consider an undirected edge-weighted graph $G=(V,E)$. 
Each edge $e \in E$ has a \emph{cost} (or \emph{length}) denoted by $l_e$. 
The length of a simple path is the sum of the lengths of its edges. The distance between nodes $u$ and $v$ is denoted by $d_{G}(u,v)$ and is equal to the length of the shortest path from $u$ to $v$ in $G$. 
There are $k$ mobile agents denoted by $a_1, \dots a_k$ and having weights $w_1,\dots w_k$. These agents are initially located on arbitrary nodes $p_1, \ldots, p_k$
of $G$. We denote by $d(a_i,v)$ the distance from the initial location of~$a_i$ to node~$v$.
Each agent can move along the edges of the graph. Each time an agent $a_i$ traverses an edge $e$ it incurs an energy cost of $w_i \cdot l_e$.
Furthermore there are $m$ pairs of (source, target) nodes in $G$ such that for $1\leq i \leq m$, a message has to be delivered from  source node $s_i$ to a target node $t_i$. A message can be picked up by an agent from any node that it visits and it can be carried to any other node of $G$, and dropped there. 
The agents are given a \emph{capacity}~$\kappa$ which limits the number of messages an agent may carry simultaneously. 
There are no restrictions on how much an agent may travel. We denote by $d_j$ the total distance traveled by the $j$-th agent.
\ourProblem is the optimization problem of minimizing the total energy $\sum_{j=1}^k w_j d_j$ needed to deliver all messages.

A \emph{schedule} $S$ describes the actions of all agents as a sequence (ordered list) of pick-up actions $(a_j,p,m_i,+)$ and drop-off actions $(a_j,q,m_i,-)$, where each such tuple denotes the action of agent $a_j$ moving from its current location to node $p$ (node $q$) where it picks up message $m_i$ (drops message $m_i$, respectively). 
A schedule~$S$ implicitly encodes all the pick-up and drop-off times and it is easy to compute its total energy use of $\cost(S) := \sum_{j=1}^k w_j d_j$.
We denote by $S|_{a_j}$ the subsequence of all actions carried out by agent $a_j$ and by $S|_{m_i}$ the subsequence of all actions involving pick-ups or drop-offs of message $m_i$. 
We call a schedule \emph{feasible} if every pick-up action $(\text{\textunderscore},p,m_i,+),\ p\neq s_i$, is directly preceded by a drop-off action $(\text{\textunderscore},p,m_i,-)$ in $S|_{m_i}$ and if all the messages get delivered, see Figure~\ref{fig:schedule-example}.


\begin{figure}[t!]
	\centering
	\includegraphics[width=\textwidth]{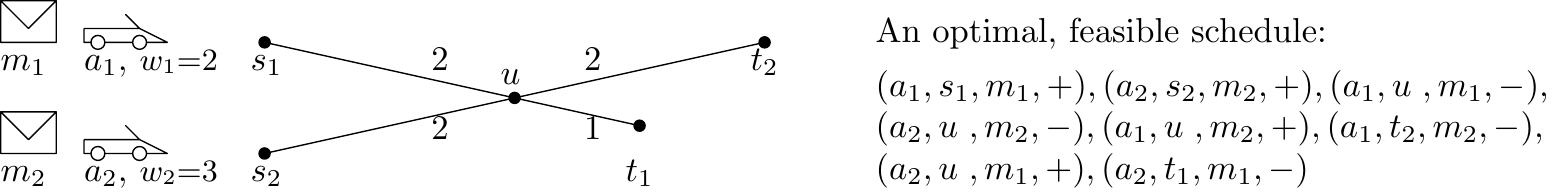}
	\caption{Example of an optimal, feasible schedule for two messages and two agents.}
	\label{fig:schedule-example}
\end{figure}

\subparagraph{Our Contribution.} 
Solving \ourProblem naturally involves simultaneously solving three subtasks, \emph{collaboration}, \emph{individual planning}, and \emph{coordination}:
First of all, if multiple agents work on the same message, they need to collaborate, i.e., we have to find all intermediate drop-off and pick-up locations of the message.
Secondly, if an agent works on more than one message, we have to plan in which order it wants to approach its subset of messages.
Finally, we have to coordinate which agent works on which subset of all messages (if they do this without collaboration, the subsets form a partition, otherwise the subsets are not necessarily pairwise disjoint).
Even though these three subtasks are interleaved, we investigate collaboration, planning and coordination separately in the next three sections. 
This leads us to a polynomial-time approximation algorithm for \ourProblem, given in Section~\ref{sec:approx}.

In Section~\ref{sec:singlemessage} we consider the \emph{Collaboration} aspect of \ourProblem. 
We first present a polynomial time solution for \ourProblem when there is only a single message ($m=1$). The algorithm has complexity  $O(|V|^3)$ irrespective of the number of agents $k$. 
In general, we show that any algorithm that only uses one agent for delivering every message cannot achieve an approximation ratio better than what we call the \emph{benefit of collaboration} $(\BoC)$ 
which is at least $1/ \ln \left( \left( 1 + 1/(2m)\right)^m \left( 1 + 1/(2m+1) \right) \right)$. We show this to be tight for $m=1$ (where $BoC \geq 1 / \ln 2$) and $m\rightarrow \infty$ (where $BoC \rightarrow 2$).

In Section~\ref{sec:planning} we look at the \emph{Planning} aspect of \ourProblem.
Individual planning by itself turns out to be \NP-hard on planar graphs and \NP-hard to approximate within a factor of less than $\tfrac{367}{366}$. On the positive side, we give approximation guarantees for restricted versions of \ourProblem which turn out to be useful for the analysis in Section~\ref{sec:approx}.

In Section~\ref{sec:message-order-hardness} we study the \emph{Coordination} aspect of \ourProblem. Even if collaboration and planning are taken care of (i.e., a schedule is fixed except for the assignment of agents to messages), 
Coordination also turns out to be \NP-hard even on planar graphs. 
The result holds for any capacity, including $\kappa=1$. This setting, however, becomes tractable if restricted to uniform weights of the agents.

In Section~\ref{sec:approx} we give a polynomial-time approximation algorithm for \ourProblem with an approximation ratio of $4\cdot \max \frac{w_i}{w_j}$ for $\kappa =1$. 
Due to the limited space, some proofs are deferred to the appendix.

\subparagraph{Related Work.} 
The problem of communicating or transporting goods between sources and destinations in a graph has been well studied in a variety of models with different optimization criteria. The problem of finding the smallest subgraph or tree that connects multiple sources and targets in a graph is called the \emph{point-to-point connection problem} and is known to be \NP-hard~\cite{Mccormick92Point-to-point}. The problem is related to the more well-known generalized Steiner tree problem~\cite{Winter87Steiner} which is also \NP-hard. 
Unlike these problems, the maximum flow problem in a network~\cite{edmonds1972theoretical}, puts a limit on the number of messages that can be transported over an edge, which makes the problem easier allowing for polynomial time solutions. In all these problems, however, there are no agents carrying the messages as in our problem.

For the case of a single agent moving in a graph, the task of optimally visiting all nodes, called the \emph{Traveling salesman problem} or visiting all edges, called the \emph{Chinese postman problem} have been studied before. The former is known to be \NP-hard~\cite{ApplegateTSP} while the latter can be solved in $O(|V|^{2}|E|)$ time~\cite{Edmonds73}. For metric graphs, the traveling salesman problem has a polynomial-time $\tfrac{3}{2}$-approximation for tours~\cite{Christofides76} and for paths with one fixed endpoint~\cite{Hoogeveen91}.
For multiple identical agents in a graph, 
Demaine et al.~\cite{Demaine2009} studied the problem of moving the agents to form desired configurations (e.g. connected or independent configurations) and they provided approximation algorithms and inapproximability results. Bilo et al.~\cite{Bilo2013} studied similar problems on visibility graphs of simple polygons and showed many motion planning problems to be hard to approximate.

Another optimization criteria is to minimize the maximum energy consumption by any agent, which requires partitioning the given task among the agents. 
Frederickson et al.~\cite{FredericksonHechtKim/76} studied this for uniform weights and called it the \emph{$k$-stacker-crane problem} and they gave approximation algorithms for a single agent and multiple agents.
Also in this minmax context, the problem of visiting all the nodes of a tree using $k$ agents starting from a single location is known to be \NP-hard~\cite{FraGKP04}. Anaya et al.~\cite{AnayaCCLPV16} studied the model of 
agents having limited energy budgets. They presented hardness results (on trees) and approximation algorithms (on arbitrary graphs) for the problem of transferring information from one agent to all others (\emph{Broadcast}) and from all agents to one agent (\emph{Convergecast}). For the same model, message delivery between a single $s$-$t$ node pair was studied by Chalopin et al.~\cite{DDalgosensors13, DDicalp14, sirocco16}\nocite{sirocco16arxiv}
as mentioned above. 
A recent paper~\cite{EnergyExchange15} shows that these three problems remain \NP-hard for general graphs even if the agents are allowed to exchange energy when they meet.

\section{Collaboration}
\label{sec:singlemessage}
In this section, we examine the \emph{collaboration} of agents: Given for each message $m_i$ all the agents $a_{i1}, a_{i2}, \ldots, a_{ix}$ which at some point carry the message,
one needs to find all pick-up and drop-off locations (handovers) $h_1, \ldots, h_y$ for the schedule entries
$(a_{i1}, \text{\textunderscore}, m_i,+), (a_{i1}, \text{\textunderscore}, m_i,-),$ $ \ldots, (a_{ix}, \text{\textunderscore}, m_i,-)$. 
Note, that in general we can have more than two action quadruples $(a_{ij}, \text{\textunderscore}, m_i,+/-)$ per agent $a_{ij}$.
When there is only a single message overall ($m=1$), we will use a structural result to tie together \ourProblem and Collaboration.
For multiple messages, however, this no longer holds: In this case, we analyze the benefit we lose if we forgo collaboration and deliver each message with a single agent.

\subsection{An Algorithm for WeightedDelivery of a Single Message}

\begin{restatable}{lem}{nonincreasing}
	\label{lemma:non-increasing}
	In any optimal solution to \ourProblem for a single message, if the message is delivered by agents with weights $w_1,w_2,\dots w_k$, in this order, then 
	(i) $w_{i} \geq w_{j}$ whenever $i<j$, and (ii) without loss of generality, $w_{i} \neq w_{j}$ for $i \neq j$.
	Hence there is an optimal schedule $S$ in which no agent $a_{j}$ has more than one pair of pick-up/drop-off actions. 
\end{restatable}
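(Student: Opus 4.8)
The plan is to prove both monotonicity claims (i) and (ii) with a single \emph{merge} exchange argument on an optimal schedule, and then obtain the concluding structural statement as a direct corollary. First I would normalize: between any two consecutive handover points the message may be assumed to travel along a shortest path, so that if an agent picks the message up at $y$ and drops it at $z$ it carries it over distance exactly $d_G(y,z)$, and trivial stints with $y=z$ can be deleted outright. Fix an optimal schedule and write $a_1,\dots,a_k$ with weights $w_1,\dots,w_k$ for the carrying agents \emph{in the order they hold the message}, so that the message visits handover points $s=h_0,h_1,\dots,h_k=t$ with $a_i$ carrying $h_{i-1}\to h_i$.

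The key step is the following exchange. Suppose two \emph{consecutive} carriers $a_i,a_{i+1}$ satisfy $w_i\le w_{i+1}$, with handover at $y=h_i$ and $a_{i+1}$ carrying $y\to z=h_{i+1}$. I merge the two stints: let $a_i$ keep the message past $y$ and carry it all the way to $z$ along the same route, while $a_{i+1}$ omits this stint. The total carried distance is unchanged and feasibility is preserved, since the drop-off at $z$ (now performed by $a_i$) still directly precedes the next pick-up at $z$. Extending $a_i$ costs an extra $w_i\, d_G(y,z)$, whereas deleting $a_{i+1}$'s stint saves its full contribution $w_{i+1}(d(a_{i+1},y)+d_G(y,z)) \ge w_{i+1}\, d_G(y,z)$. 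Hence the cost changes by at most $(w_i-w_{i+1})\, d_G(y,z)\le 0$, and \emph{strictly} negatively when $w_i<w_{i+1}$ and $y\neq z$.

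Claim (i) then follows immediately: a strict inequality $w_i<w_{i+1}$ would let the merge produce a cheaper schedule, contradicting optimality, so consecutive weights are non-increasing, and transitivity yields $w_i\ge w_j$ for all $i<j$. For (ii), suppose $w_i=w_{i+1}$; the merge is now cost non-increasing, so among all optimal schedules I would take one with the fewest carrying stints. In such a schedule no two consecutive carriers can share a weight, since the merge would reduce the number of stints without increasing the cost. Because (i) forces equal weights to appear consecutively along the carry order, this shows the carrying weights are, without loss of generality, all distinct. Finally, strictly decreasing weights mean each weight, hence each agent, occurs at most once in the carry order, so in this optimal schedule every agent has at most one pick-up/drop-off pair, which is precisely the concluding statement.

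The main obstacle is that the clean accounting above assumes the extended agent $a_i$ is not \emph{reused} later: if $a_i$ carries another, non-contiguous stint after $h_i$, then moving its drop-off from $y$ to $z$ also changes its subsequent repositioning cost, and the bound $(w_i-w_{i+1})\, d_G(y,z)$ must be re-derived. A triangle-inequality estimate only gives $d_G(z,u)-d_G(y,u)\le d_G(y,z)$ for the next repositioning target $u$, which is too weak by itself. I would resolve this by always applying the merge inside a stint-minimal optimal schedule and choosing the merged pair so as to avoid reusing the earlier carrier, arguing that any reuse of an agent forces, via (i), a block of equal weights that is itself removed by the (ii) merges, so that the extra repositioning term never arises. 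Controlling this reuse case is where the real work lies; the weight monotonicity itself is a direct consequence of the merge inequality.
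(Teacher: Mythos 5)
Your core argument is the paper's own proof: the paper's (two-sentence) appendix proof performs exactly your merge --- if $a_i$ hands over to a heavier $a_{i+1}$, let $a_i$ carry that stint itself, which saves $a_{i+1}$'s approach plus carry and costs only $w_i\,d_G(y,z)$; equal-weight neighbours are merged at no extra cost --- and it reads off (i), (ii) and the one-pair-per-agent conclusion just as you do. Up to your final paragraph, your write-up is a correct, more detailed rendering of that argument, valid whenever no agent occurs twice in the carry order.

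The reuse case is a real subtlety that the paper skips silently, but your proposed resolution is circular: you invoke (i) to argue that reuse forces a block of equal weights, yet (i) is established by the very merge inequality that reuse invalidates, so it cannot be assumed at that point. (Note also that the difficulty is not confined to the extended agent $a_i$: if the \emph{deleted} agent $a_{i+1}$ reappears later, removing its stint saves only a triangle-inequality difference, which can be strictly smaller than $w_{i+1}\,d_G(y,z)$, so your ``full contribution'' accounting also presupposes no reuse.) The repair needs no appeal to (i). Normalize as you do (shortest-path carries, no trivial stints, back-to-back stints of the same agent merged) and suppose some agent $c$ drops the message at $h$ and picks it up again later at $h'$; between these two actions $c$ walks empty from $h$ to $h'$ while other agents ferry the message across this \emph{gap}. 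Among all such reuse gaps of all agents, choose one whose re-pickup occurs latest in the schedule. By this maximality, no agent carrying inside the chosen gap can have any stint after the gap, since that would create a reuse gap with a later re-pickup. Now delete every stint inside the gap and let $c$ carry the message along its own, unchanged walk from $h$ to $h'$: $c$'s cost is unchanged, the schedule stays feasible, and every deleted carrier saves at least its (positive) carrying cost, because its trajectory is simply truncated at its last action before the gap. This is a strict improvement, contradicting optimality. Hence in any normalized optimal schedule no agent carries the message twice, your merge accounting holds verbatim, and the rest of your proof --- which is then precisely the paper's --- is complete.
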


\ifProofsAppendix \else
	\ExecuteMetaData[appendix-collaboration.tex]{nonincreasing}
\fi

\drop{
\begin{corollary}
	There is an optimal schedule to \ourProblem for a single message $m_1$, in which no agent $a_{1j}$ has more than one pair of pick-up/drop-off actions in $S|_{a1j}$.
\end{corollary}
}

\begin{theorem}
An optimal solution of \ourProblem of a single message in a graph $G=(V,E)$ with $k \leq |V|$ agents can be found in $O(|V|^3)$ time. 
	\label{th:alg_single_msg}
\end{theorem}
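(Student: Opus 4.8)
The plan is to convert the structural description of an optimal schedule provided by Lemma~\ref{lemma:non-increasing} into a dynamic program over the nodes of $G$. By the lemma we may restrict attention to schedules in which the message is carried by a sequence of agents of non-increasing weight and in which each participating agent performs exactly one pick-up and one drop-off, both at nodes of $G$ (recall that pick-ups and drop-offs are only permitted at nodes). Hence an optimal schedule is encoded by a sequence of handover nodes $s = h_0, h_1, \dots, h_r = t$ together with an assignment of one agent $a_{j_i}$ to each leg $h_{i-1} \to h_i$, where $a_{j_i}$ first moves empty from its start $p_{j_i}$ to $h_{i-1}$ and then carries the message to $h_i$, and where $w_{j_1} \ge \dots \ge w_{j_r}$. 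The cost of such a schedule is $\sum_{i=1}^{r} w_{j_i}\big(d(a_{j_i}, h_{i-1}) + d_G(h_{i-1}, h_i)\big)$.

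First I would precompute, in $O(|V|^3)$ time, all-pairs shortest-path distances $d_G(u,v)$ (this also yields the agent-to-node distances $d(a_j,v) = d_G(p_j,v)$). I would then process the agents in order of non-increasing weight, maintaining a value $C(v)$ equal to the minimum cost of delivering the message from $s$ to $v$ using only the agents seen so far, initialised with $C(s) = 0$ and $C(v) = \infty$ otherwise. When agent $a_j$ is processed, I update every node $v$ via
\[
  C(v) \leftarrow \min\Big(C(v),\ \min_{u \in V}\big[\,C(u) + w_j\, d(a_j,u) + w_j\, d_G(u,v)\,\big]\Big),
\]
which models $a_j$ optionally picking up the message at a node $u$ where a heavier agent had already deposited it and carrying it (along a shortest path) to $v$. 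Processing the agents by non-increasing weight enforces the monotonicity of Lemma~\ref{lemma:non-increasing} and ensures no agent is used twice, while the $C(v)$ term inside the minimum lets any agent be skipped. The algorithm outputs $C(t)$.

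For the running time, the key observation is that the per-agent update costs $O(|V|^2)$ rather than $O(|V|^3)$: one first computes $g(u) := C(u) + w_j\, d(a_j,u)$ for all $u$ in $O(|V|)$ time, and then evaluates $\min_{u}\big[g(u) + w_j\, d_G(u,v)\big]$ for each $v$ in $O(|V|)$ time. Summed over the $k \le |V|$ agents this is $O(|V|^3)$, matching the cost of the shortest-path precomputation.

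The correctness argument is where the real work lies. In one direction, every value $C(v)$ computed by the DP is the cost of an explicit feasible (partial) schedule reaching $v$, so $\OPT \le C(t)$. In the other direction I would take an optimal schedule, apply Lemma~\ref{lemma:non-increasing} to put it in the canonical non-increasing/one-leg-per-agent/node-handover form above, and prove by induction on the legs $h_0, \dots, h_r$ that after the DP has processed the agent serving leg $i$ one has $C(h_i) \le$ (cost of the length-$i$ prefix of the schedule); the weight monotonicity is exactly what guarantees that each needed agent is processed in the correct round and is still unused, and since updates only decrease entries, the inductive bound on $C(h_i)$ is preserved by intervening rounds. This yields $C(t) \le \OPT$, hence $C(t) = \OPT$. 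The main obstacle is exactly this matching between the DP's weight-ordered processing and the schedule's left-to-right leg order — in particular arguing that restricting to the node-only, non-increasing-weight handover structure of the lemma loses nothing and that processing agents in weight order with a skip option realises every such schedule; the remainder is routine bookkeeping.
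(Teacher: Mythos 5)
Your proof is correct, and it rests on the same structural foundation as the paper's---Lemma~\ref{lemma:non-increasing}'s reduction of an optimal schedule to a sequence of node handovers served by agents of strictly decreasing weight---but the algorithmic realization is genuinely different. The paper builds an explicit layered auxiliary digraph $G'$ with one copy of $V$ per agent: arcs of cost $w_i\, l_e$ inside agent $a_i$'s layer, handover arcs $(u_{a_i},u_{a_j})$ of cost $w_j\, d_G(p_j,u)$ going only from heavier to lighter agents, plus source/sink arcs; it then runs Dijkstra (with Fibonacci heaps) on the resulting $O(|V|^2)$ vertices and $O(|V|^3)$ arcs, so correctness is delegated to the correspondence between $s$-$t$ paths in $G'$ and canonical schedules. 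You instead observe that the weight monotonicity makes the inter-layer structure acyclic, so sorting agents by non-increasing weight is a topological order: a value-table DP with one $O(|V|^2)$ relaxation round per agent (working over the metric closure $d_G$, hence no explicit layer edges) suffices, and Dijkstra is not needed at all. Both approaches pay $O(|V|^3)$ for all-pairs shortest paths and $O(|V|^3)$ overall. Your route is somewhat more elementary---APSP plus table updates---and your two points of care (taking a snapshot of $C$ so an agent is used at most once per round, and the leg-by-leg induction showing the weight order processes each needed agent in time) are exactly the right ones; the paper's route makes the schedule-to-path bijection more transparent and inherits correctness from standard shortest-path theory. One detail worth stating explicitly in your write-up: ties in weight are harmless because Lemma~\ref{lemma:non-increasing}(ii) lets you assume the agents used in the optimal schedule have pairwise distinct weights, so they occur in the correct relative order under any non-increasing sort.
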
  

\begin{proof}
	We use the properties of Lemma~\ref{lemma:non-increasing} to create an auxiliary graph on which we run Dijkstra's algorithm for computing a shortest path from $s$ to $t$. 
	Given an original instance of single-message \ourProblem  consisting of the graph $G = (V,E)$, with $s,t \in V$, we obtain the auxiliary, \emph{directed} graph $G' = (V',E')$ as follows:
	\begin{itemize}
		\item	For each node $v \in V$ and each agent $a_i$, there is a node $v_{a_i}$ in $G'$.\\
			Furthermore $G'$ contains two additional vertices $s$ and $t$.
		\item	For $1\leq i \leq k$, there is an arc $(s,s_{a_i})$ of cost $w_i \cdot d_{G}(p_i,s)$ and an arc $(t_{a_i},t)$ of cost $0$. 
		\item	For $(u,v) \in E$ and $1\leq i\leq k$, there are two arcs $(u_{a_i},v_{a_i})$ and $(v_{a_i},u_{a_i})$ of cost $w_i \cdot l_{(u,v)}$.
		\item	For $u \in V$ and agents $a_i, a_j$ with $w_i>w_j$, there is an arc $(u_{a_i},u_{a_j})$ of cost $w_j \cdot d_{G}(p_j,u)$.
	\end{itemize}
	Note that any solution to the \ourProblem that satisfies the properties of Lemma~\ref{lemma:non-increasing} corresponds to some $s$-$t$-path in $G'$ 
	such that the cost of the solution is equal to the length of this path in $G_a$ and vice versa.
	This implies that the length of the shortest $s$-$t$ path in $G'$ is the cost of the optimal solution for \ourProblem in $G$. 
	Assuming that $k \leq |V|$, the graph $G'$ has $|V|\cdot k +2 \in O(|V|^2)$ vertices and at most $2k + (k^2|V| + |V|^2k)/2 -|V|\cdot k \in O(|V|^3)$ arcs. 
	The graph $G'$ can be constructed in $O(|V|^3)$ time if we use the \emph{Floyd Warshall} all pair shortest paths algorithm \cite{floyd1962algorithm, warshall1962theorem} in $G$.
	Finally, we compute the shortest path from $s$ to $t$ in $G'$ in time $O(|V|^3)$, using Dijkstra's algorithm with Fibonacci heaps.
\end{proof}

Unfortunately, the structural properties of Lemma~\ref{lemma:non-increasing} do not extend to multiple messages. 
In the next two subsections we investigate how the quality of an optimal solution changes if we only allow every message to be transported by one agent. 
Different messages may still be transported by different agents and one agent may also transport multiple messages at the same time as long as the number of messages is at most the capacity $\kappa$. 
To this end we define the \emph{Benefit of Collaboration} as the cost ratio between an optimal schedule $\OPT$ and a best-possible schedule without collaboration $S$, $\BoC = \min_S \cost(S)/\cost(\OPT)$.

\subsection{Lower Bound on the Benefit of Collaboration}


\begin{theorem}\label{theo-lb-boc}
On instances of \ourProblem with agent capacity $\kappa$ and $m$ messages, an algorithm using one agent for delivering every message cannot achieve an approximation ratio better than $1/ \ln \left( \left( 1 + 1/(2 r)\right)^r \left( 1 + 1/(2r+1) \right) \right)$, where $r:=\min\{\kappa,m\}$. 
\end{theorem}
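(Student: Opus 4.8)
The plan is to exhibit, for every value of $r=\min\{\kappa,m\}$, an explicit family of instances of \ourProblem on which the best non-collaborative schedule is exactly a factor $\ln\!\left((1+\tfrac1{2r})^r(1+\tfrac1{2r+1})\right)$ more expensive than $\OPT$. For each such instance I would (a) describe one collaborative relay schedule and argue it is optimal, thereby pinning down $\cost(\OPT)$, and (b) prove a matching lower bound on $\cost(S)$ for \emph{every} non-collaborative schedule $S$. Since $\BoC=\min_S\cost(S)/\cost(\OPT)$, combining the two bounds on the same instance yields the theorem.

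It is cleanest to first settle the single-message case $r=1$, which already exposes the mechanism. Put $s$ and $t$ at the two ends of a unit-length path and place, at (a fine discretization of) every point $x\in[0,1]$, an agent of weight $w(x)=1/(1+x)$. A single agent located at $x$ must first travel to the source and then carry the message to $t$, incurring cost $w(x)\,(x+1)=1$ \emph{independently} of $x$; hence any schedule delivering the message with one agent costs at least $1$. A collaborative relay, by contrast, lets the agent sitting at each point carry the message only over its own infinitesimal segment. By Lemma~\ref{lemma:non-increasing} and Theorem~\ref{th:alg_single_msg} the single-message optimum has this relay form, so here $\cost(\OPT)=\int_0^1 w(x)\,dx=\ln 2$, and the ratio is exactly $1/\ln 2 = 1/\ln\!\left((1+\tfrac12)^1(1+\tfrac13)\right)$.

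To drive the ratio towards $2$ I would generalize this so that a single collaborative relay serves $r$ messages \emph{at once}, amortizing over all of them the repositioning cost that a non-collaborative schedule must pay separately for each message. Concretely, I would arrange the $r$ messages so that their carrying paths overlap in a shared corridor which one relay can service a single time for all messages, while individual source branches feed the messages into that corridor; the harmonic weight profile is tuned exactly as in the base case so that delivering any \emph{one} message without a hand-over always costs the same normalized amount, and capacity caps the number of messages that may ride the shared relay together at $r=\min\{\kappa,m\}$, which is precisely why the bound depends on $r$. Summing the per-branch contributions (one per message) and the single shared-corridor contribution turns $\cost(\OPT)$ into a harmonic sum $r\ln(1+\tfrac1{2r})+\ln(1+\tfrac1{2r+1})=\ln\!\left((1+\tfrac1{2r})^r(1+\tfrac1{2r+1})\right)$, while the best non-collaborative cost normalizes to $1$; the monotone limits $r=1\mapsto\ln 2$ and $r\to\infty\mapsto\tfrac12$ reproduce the announced range from $1/\ln 2$ to $2$.

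The main obstacle is the non-collaborative lower bound: I must rule out \emph{all} one-agent-per-message schedules, not just the obvious ones. In particular I have to show that neither a clever routing nor batching several messages into a single agent can beat the normalization — the harmonic weights and the branch geometry must be designed so that any attempt to reuse one agent for several messages forces it to run back and forth among the source branches, cancelling the saving it would get on the shared corridor. Proving this cleanly, presumably by a charging argument that assigns to each message a pairwise-disjoint share of $\cost(S)$ and bounds each share from below, is the delicate part; by contrast, pinning down $\cost(\OPT)$ is comparatively routine, since Lemma~\ref{lemma:non-increasing} and Theorem~\ref{th:alg_single_msg} certify that the relay is optimal on each corridor.
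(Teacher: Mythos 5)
Your construction is essentially the paper's: the paper also uses a star with $r$ unit-length source branches feeding into a shared unit-length corridor toward a common destination, with agents placed at distance $j/n$ along the message paths carrying the harmonic weight profile $w_{i,j}=\frac{2r}{2r+j/n}$, and it likewise uses the collaborative relay only as an \emph{upper} bound on $\cost(\OPT)$, obtaining in the limit $n\to\infty$ the cost $2r\ln\bigl(\bigl(1+\tfrac{1}{2r}\bigr)^r\bigl(1+\tfrac{1}{2r+1}\bigr)\bigr)$ you describe. However, the step you explicitly defer — ruling out \emph{every} one-agent-per-message schedule, including ones that batch several messages into one agent — is the entire technical content of the theorem, and your proposal does not prove it; as written this is a genuine gap. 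The paper closes it with a short per-agent argument that your charging sketch would need to produce: if an agent $a_{i,j}$ (starting at distance $j/n$ along a message path) delivers $s$ messages by itself, then since the $s$ sources lie on distinct branches it must travel at least $j/n$ to its first source, $1$ to the center, $2$ per additional message (out to its source and back), and $1$ to the destination, i.e.\ distance at least $2s+\tfrac{j}{n}$; its cost is then at least $(2s+\tfrac{j}{n})\cdot\frac{2r}{2r+j/n}\geq(2s+\tfrac{j}{n})\cdot\frac{2s}{2s+j/n}=2s$, using $s\le r$ and monotonicity of $a\mapsto\frac{2a}{2a+c}$. Summing over agents gives $\cost(S)\ge 2r$ for every non-collaborative schedule $S$, so batching gains nothing — exactly the back-and-forth cancellation you conjectured, but it requires this calculation, and the weight profile must be tuned to $r$ (not per message) for the inequality to go through.

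A secondary error: you claim Lemma~\ref{lemma:non-increasing} and Theorem~\ref{th:alg_single_msg} certify that the relay is optimal, "pinning down" $\cost(\OPT)$ exactly. For $r>1$ those results do not apply — the paper stresses that the structural property of Lemma~\ref{lemma:non-increasing} fails for multiple messages — and exact optimality of the collaborative schedule is neither available nor needed: since $\BoC=\min_S \cost(S)/\cost(\OPT)$, a lower bound on all non-collaborative $S$ together with an upper bound on $\cost(\OPT)$ (given by any feasible collaborative schedule) suffices. Your plan item (a), "argue it is optimal", should be dropped in favor of this one-sided bound.
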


\begin{figure}[b!]
\tikzstyle{graphnode}=[circle,draw,minimum size=3em,scale=0.53]
\tikzstyle{dnode}=[scale=0.5]
\newcommand*{\xFactor}{3}
\newcommand*{\yFactor}{1.2}
\begin{tikzpicture}
\foreach \j in {0,1} {
			\pgfmathsetmacro\koordX{-2+0.5*\j}
			\pgfmathsetmacro\koordY{0.6-0.15*\j}
			\node (a\j) at (\koordX*\xFactor,0.75*\yFactor) [graphnode] {$v_{1,\j}$};
			\node (b\j) at (\koordX*\xFactor,0.25*0.75*\yFactor) [graphnode] {$v_{2,\j}$};
			\node (c\j) at (\koordX*\xFactor,-0.75*\yFactor) [graphnode] {$v_{r,\j}$};
}
\node (a3) at (-0.7*\xFactor,0.75*\yFactor) [dnode] {$v_{1,n-1}$};
\node (b3) at (-0.7*\xFactor,0.25*0.75*\yFactor) [dnode] {$v_{2,n-1}$};
\node (c3) at (-0.7*\xFactor,-0.75*\yFactor) [dnode] {$v_{r,n-1}$};
\node (a3) at (-0.7*\xFactor,0.75*\yFactor) [graphnode] {};
\node (b3) at (-0.7*\xFactor,0.25*0.75*\yFactor) [graphnode] {};
\node (c3) at (-0.7*\xFactor,-0.75*\yFactor) [graphnode] {};

\node (d0) at(0*\xFactor,0) [dnode] {$v_{0,n}$};
\node (d1) at(0.75*\xFactor,0) [dnode] {$v_{0,n+1}$};
\node (d3) at (2*\xFactor,0) [dnode] {$v_{0,2n}$};
\node (d0) at(0*\xFactor,0) [graphnode] {};
\node (d1) at(0.75*\xFactor,0) [graphnode] {};
\node (d3) at (2*\xFactor,0) [graphnode] {};

\foreach \i in {a,b,c,d} {
		\foreach \j in {0} {
		\pgfmathtruncatemacro{\k}{\j+1}
			\draw (\i\j) -- (\i\k);	
		}
}
\draw (a3) -- (d0);	
\draw (b3) -- (d0);	
\draw (c3) -- (d0);	

\draw[dashed] (a1) -- (a3);	
\draw[dashed] (b1) -- (b3);	
\draw[dashed] (c1) -- (c3);	
\draw[dashed] (d1) -- (d3);	
\draw[loosely dotted](-2*\xFactor,-0.2*\yFactor)--(-2*\xFactor,-.4*\yFactor); 
\draw[loosely dotted](-1.5*\xFactor,-0.2*\yFactor)--(-1.5*\xFactor,-.4*\yFactor); 
\draw[loosely dotted](-0.7*\xFactor,-0.2*\yFactor)--(-0.7*\xFactor,-.4*\yFactor); 
\end{tikzpicture}
\caption{Lower bound construction for the benefit of collaboration.}\label{fig:boc-lower-bound}
\end{figure}

\begin{proof}
Consider the graph $G=(V,E)$ given in Figure~\ref{fig:boc-lower-bound}, where the length $l_e$ of every edge $e$ is $1/n$.
This means that $G$ is a star graph with center~$v_{0,n}$ and $r+1$ paths of total length 1 each. We have $r$ messages and message $i$ needs to be transported from $v_{i,0}$ to $v_{0,2n}$ for $i=1,\ldots, r$. There further is an agent $a_{i,j}$ with weight
$w_{i,j} = \tfrac{2r}{2r + j/n}$ starting at every vertex $v_{i,j}$ for $(i,j) \in \{1,\ldots, r\} \times \{0,\ldots, n-1\} \cup \{0\} \times \{n,\ldots, 2n\}$. 

We first show the following: If any agent transports $s$ messages $i_1,\ldots, i_s$ from $v_{i_j,0}$ to $v_{0,2n}$, then this costs at least $2s$. Note that this implies that any schedule $S$ for delivering all messages by the agents such that every message is only carried by one agent satisfies $\cost(S) \geq 2 r.$

So let an agent $a_{i,j}$ transport $s$ messages from the source to the destination $v_{0,2n}$. Without loss of generality let these messages be $1,\ldots, s$, which are picked up in this order. 
By construction, agent $a_{i,j}$ needs to travel a distance of at least $\tfrac{j}{n}$ to reach message~$1$, then distance 1 to to move back to $v_{0,n}$, then distance 2 for picking up message $i$ and going back to $v_{0,n}$ for $i=2,\ldots,s$, and finally it needs to move distance 1 from $v_{0,n}$ to $v_{0,2n}$. Overall, agent $a_{i,j}$ therefore travels a distance of at least $2s+\tfrac{j}{n}$.
The overall cost for agent $a_{i,j}$ to deliver the $s$ messages therefore is at least
\begin{linenomath}
\begin{align*}
(2s+\tfrac{j}{n}) \cdot w_{i,j} = (2s+\tfrac{j}{n})  \cdot \tfrac{2r}{2r + j/n} \geq
(2s+\tfrac{j}{n})  \cdot \tfrac{2s}{2s + j/n} = 2s.
\end{align*}
\end{linenomath}

Now, consider a schedule $S_\text{col}$, where the agents collaborate, i.e., agent $a_{i,j}$ transports message $i$ from $v_{i,j}$ to $v_{i,j+1}$ for $i=1,\ldots, r$, $j=0,\ldots,n-1$, where we identify $v_{i,n}$ with $v_{0,n}$. Then agent $a_{0,j}$ transports all $r$ messages from $v_{0,j}$ to $v_{0,j+1}$ for $j=n,\ldots, 2n-1$. This is possible because $r\leq \kappa$ by the choice of $r$. The total cost of this schedule is given by
\begin{linenomath}
\begin{align*}
\cost(S_\text{col})=r \cdot \int_0^1 f_\text{step} (x) dx +  \int_1^2 f_\text{step} (x) dx, 
\end{align*}
\end{linenomath}
where $f_\text{step} (x)$ is a step-function defined on $[0,2]$ giving the current cost of transporting the message, i.e., $f_\text{step} (x)=\tfrac{2 r}{2r+j/n}$ on the interval $[j/n,(j+1)/n)$ for $j=0,\ldots, 2n-1$. The first integral corresponds the the first part of the schedule, where the $r$ messages are transported separately and
therefore the cost of transporting message $i$ from $v_{i,j}$ to $v_{i,j+1}$ is exactly $\int_{j/n}^{(j+1)/n} f_\text{step} (x) dx = \tfrac{1}{n} \cdot  \tfrac{2 r}{2r+j/n}$. The second part of the schedule corresponds to the part, where all $r$ messages are transported together by one agent at a time.\\
Observe that the function $f(x)=2 r \cdot \tfrac{1}{2r-1/n+x}$ satisfies $f(x) \geq  f_\text{step} (x)$ on $[0,2]$, hence
\begin{linenomath}
\begin{align*}
\cost(S_\text{col}) & \leq r \int_0^1 f(x) dx +  \int_1^2 f (x) dx 
= 2r\left(r  \ln (2r - \tfrac{1}{n} +x)\Big|_0^1 +  \ln (2r - \tfrac{1}{n} +x)\Big|_1^2 \right) \\
&= 2 r  \ln \left( \left( \tfrac{2r - 1/n +1 }{2r-1/n}\right)^r \left( \tfrac{2r - 1/n +2 }{2r-1/n+1} \right) \right)
\stackrel{\smash{n\to\infty}}{\rightarrow}
2 r  \ln \left( \left(1 + \tfrac{1}{2r}\right)^r \left( 1 + \tfrac{1}{2r+1} \right) \right).
\end{align*}
\end{linenomath}
The best approximation ratio of an algorithm that transports every message by only one agent compared to an algorithm that uses an arbitrary number of agents for every message is therefore bounded from below by
\begin{linenomath}
\begin{align*}
BoC \geq \min_S \cost(S)/ \cost(S_\text{col}) \geq 1 / \ln \left( \left(1 + \tfrac{1}{2r}\right)^r \left( 1 + \tfrac{1}{2r+1} \right) \right).	& && \qedhere 
\end{align*}
\end{linenomath}
\end{proof}

By observing that $\lim_{r\to\infty} 1/ \ln \left( \left( 1 + 1/(2 r)\right)^r \left( 1 + 1/(2r+1) \right) \right) =1/ \ln \left( e^{1/2} \right) = 2$, we obtain the following corollary.


\begin{corollary}\label{cor-lb-boc}
A schedule for \ourProblem where every message is delivered by a single agent cannot achieve an approximation ratio better than $2$ in general, and better than~$1/\ln 2 \approx 1.44$ for a single message.
\end{corollary}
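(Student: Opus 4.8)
The plan is to obtain both bounds as direct specializations of the lower bound established in Theorem~\ref{theo-lb-boc}. Write
\begin{linenomath}
\begin{equation*}
\phi(r):=1\big/\ln\!\left(\left(1+\tfrac{1}{2r}\right)^{r}\left(1+\tfrac{1}{2r+1}\right)\right),
\end{equation*}
\end{linenomath}
so that Theorem~\ref{theo-lb-boc} says that, on the instance of Figure~\ref{fig:boc-lower-bound} with $r=\min\{\kappa,m\}$, no schedule delivering each message by a single agent can beat the ratio $\phi(r)$. The whole corollary then amounts to evaluating $\phi$ at $r=1$ and in the limit $r\to\infty$.

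For the single-message claim I would set $m=1$. Since the capacity always satisfies $\kappa\geq 1$, this forces $r=\min\{\kappa,1\}=1$, and the theorem immediately yields the lower bound
\begin{linenomath}
\begin{equation*}
\phi(1)=\frac{1}{\ln\!\left(\tfrac{3}{2}\cdot\tfrac{4}{3}\right)}=\frac{1}{\ln 2}\approx 1.44,
\end{equation*}
\end{linenomath}
because $\left(1+\tfrac{1}{2}\right)\left(1+\tfrac{1}{3}\right)=2$. This step is a pure substitution and needs no further work.

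For the general claim I would let $r\to\infty$, which is legitimate since $r=\min\{\kappa,m\}$ can be made arbitrarily large by taking instances with both large capacity and many messages. Using the standard limit $\left(1+\tfrac{1}{2r}\right)^{r}\to e^{1/2}$ together with $\left(1+\tfrac{1}{2r+1}\right)\to 1$, the argument of the logarithm tends to $e^{1/2}$, so $\phi(r)\to 1/\ln\!\left(e^{1/2}\right)=2$. The one point I would be careful about is the quantifier hidden in ``in general'': the value $\phi(r)$ is strictly below $2$ for every finite $r$ (already $\phi(1)=1/\ln 2<2$) and it climbs monotonically towards its supremum $2$. The correct reading is therefore that for every $\varepsilon>0$ there is an instance on which no single-agent-per-message schedule achieves ratio better than $2-\varepsilon$; hence this restricted class of algorithms cannot guarantee any ratio better than $2$, even though $2$ itself is attained by no finite instance.

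There is no real obstacle here beyond these two limit evaluations. The only thing genuinely worth verifying is the monotone approach of $\phi(r)$ to $2$ from below, i.e.\ that $\ln\!\left(\left(1+\tfrac{1}{2r}\right)^{r}\left(1+\tfrac{1}{2r+1}\right)\right)$ decreases in $r$ to the value $\tfrac12$; this is what justifies stating the clean bound ``$2$ in general'' rather than merely ``$2-\varepsilon$ for each $\varepsilon$,'' and it follows from comparing the step function of the construction against the continuous integrand $f$ already used in the proof of Theorem~\ref{theo-lb-boc}.
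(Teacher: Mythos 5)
Your proposal is correct and matches the paper's own derivation: the paper obtains this corollary directly from Theorem~\ref{theo-lb-boc} by noting that $r=\min\{\kappa,m\}=1$ gives $1/\ln 2$ for a single message, and that $\lim_{r\to\infty} 1/\ln\left(\left(1+1/(2r)\right)^r\left(1+1/(2r+1)\right)\right) = 1/\ln\left(e^{1/2}\right) = 2$ for the general bound. Your extra care about the quantifier (the bound $2$ being a supremum over instances) and the monotonicity check are fine but not needed, since the limit alone already shows no ratio $2-\varepsilon$ is achievable for any $\varepsilon>0$.
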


\subsection{Upper Bounds on the Benefit of Collaboration}

We now give tight upper bounds for Corollary~\ref{cor-lb-boc}.
The following theorem shows that the benefit of collaboration is~2 in general. 
We remark that finding an optimal schedule in which every message is transported from its source to its destination by one agent, is already \NP-hard, as shown in Theorem~\ref{thm:single-agent-hardness}.
 
\begin{restatable}{thm2}{BoCUB}
	\label{theo-ub-boc}
	Let $\OPT$ be an optimal schedule for a given instance of \ourProblem. Then there exists a schedule $S$ such that every message is only transported by one agent and $\cost(S)\leq 2 \cdot \cost(\OPT)$. 
\end{restatable}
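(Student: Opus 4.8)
The plan is to begin from a fixed \emph{optimal} schedule $\OPT$ and, for every message $m_i$, read off its \emph{delivery walk} $W_i$ from $s_i$ to $t_i$ together with the ordered list of agents that carry $m_i$ and the handover points between them. Since carrying a message along anything but a shortest path between consecutive handovers is wasteful, I may assume every carried segment of $\OPT$ is a shortest path; then $d_G(s_i,t_i)$ is at most the length of $W_i$, and more generally the length of any subwalk of $W_i$ dominates the geodesic distance between its endpoints by the triangle inequality. The goal is to build a handover-free schedule $S$, in which each message travels with a single agent from its source to its target, with $\cost(S)\le 2\,\cost(\OPT)$.

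For the construction I would assign each message $m_i$ to a single one of its $\OPT$-carriers, namely the agent $b_i$ that delivers $m_i$ to $t_i$ (its \emph{last} carrier), which in the single-message case is exactly the lightest carrier by Lemma~\ref{lemma:non-increasing}. In $S$, agent $b_i$ travels from its initial position to $s_i$, picks up $m_i$ itself, and carries it along a shortest path to $t_i$; messages that $\OPT$ transports together and that happen to share the same $b_i$ are kept together in $S$, so that $S$ stays feasible for the given capacity. This yields a schedule with no collaboration at all.

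To bound the cost I would argue per message that the contribution of $m_i$ to $\cost(S)$ is at most $w_{b_i}\,d(b_i,s_i)+w_{b_i}\,d_G(s_i,t_i)$. The carrying term $w_{b_i}\,d_G(s_i,t_i)$ is at most the cost $C_i$ that $\OPT$ already pays to carry $m_i$ along $W_i$, because $b_i$ is the lightest carrier, so $w_{b_i}$ is no larger than the weight of any agent handling $m_i$. The approach term I would split using the triangle inequality into $b_i$'s own $\OPT$ approach to the point where it takes over $m_i$, plus $w_{b_i}$ times the length of the prefix of $W_i$ carried by \emph{other} agents, which is again bounded by $C_i$. When no movement is shared, summing these estimates gives roughly $A_{\OPT}+2\,C_{\OPT}\le 2(A_{\OPT}+C_{\OPT})=2\,\cost(\OPT)$, where $A_{\OPT}$ and $C_{\OPT}$ denote the total approach and carrying costs of $\OPT$; the one extra factor on the carrying part is precisely the ``go back to the source'' overhead that produces the constant $2$.

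The delicate point, and the step I expect to be the main obstacle, is making this estimate rigorous for \emph{arbitrary capacity} and \emph{non-uniform weights} simultaneously. A message-by-message charge double-counts the parts of $\OPT$ on which several messages are transported together, or on which one approach serves several pick-ups, so the naive bound degrades to a factor depending on the capacity or on $\max\frac{w_i}{w_j}$ rather than on $2$. To keep the tight constant I would have to (i) replace the monotonicity of Lemma~\ref{lemma:non-increasing}, which does \emph{not} extend to several messages, by a multi-message consequence of the optimality of $\OPT$ guaranteeing that long prefixes are carried by agents no lighter than the final carrier, and (ii) charge each edge traversed by $\OPT$ at most twice overall by letting $S$ mirror $\OPT$'s co-transport and re-use the shared fetching detours. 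The hardest sub-case is when two messages are carried together by $\OPT$ but are split off to \emph{different} final carriers; showing that optimality nonetheless forces the co-transport-preserving charging to stay within a factor of $2$ is where the real work lies.
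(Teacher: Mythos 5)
Your proposal has a genuine gap, and it sits exactly where you suspect: step (i) of your final paragraph asks for a property that is provably false. Your per-message charge needs $w_{b_i}\,d_G(s_i,t_i)\le C_i$, i.e., that the \emph{last} carrier of $m_i$ in $\OPT$ is no heavier than every agent that carried $m_i$ earlier. Lemma~\ref{lemma:non-increasing} gives this only for $m=1$; for two or more messages the paper exhibits an explicit counterexample (in the appendix, following the restated lemma): two messages on a small unit-length graph where the \emph{unique} optimum has an agent of weight $1$ carry a message part-way and hand it to an agent of weight $1.5$ who completes the delivery. So no ``multi-message consequence of optimality'' can guarantee that prefixes are carried by agents no lighter than the final carrier, and both your carrying bound and your prefix bound collapse. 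The second issue you flag---double-counting of co-transported segments, which degrades the naive sum to a factor depending on $\kappa$ or $\max\frac{w_i}{w_j}$---is equally real, and your item (ii) is a restatement of the goal rather than an argument. Note also that your schedule insists on direct source-to-target delivery with no intermediate drop-offs; that stronger property is only established in the paper for $\kappa\in\{1,\infty\}$ (Theorem~\ref{theo-boc-nointermediate}), so building it into a proof of Theorem~\ref{theo-ub-boc} for arbitrary capacity is an additional, unjustified burden.

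The paper's proof avoids per-message charging and OPT-carrier assignments altogether. It forms a directed multigraph with one forward arc per move of any agent in $\OPT$, labeled by the exact set of messages carried on that move, plus a matching backward arc per forward arc. Each connected component is Eulerian, and the \emph{single cheapest agent that moves in that component} traverses a label-respecting Eulerian tour: on forward arcs it carries exactly the labeled message set, on backward arcs nothing. Every unit of $\OPT$ movement (approach and carrying alike) is thus paid exactly twice, at a weight no larger than the weight that paid for it in $\OPT$, giving the factor $2$ uniformly in $\kappa$ and in the weights; each message is touched by only one agent, though it may be dropped at intermediate vertices. The real work---which replaces the step you could not supply---is showing that a label-respecting Eulerian tour exists at all: the agent may arrive at a vertex missing a message it needs to proceed, must recursively fetch it along backward arcs, and one has to prove these fetches cannot enter a circular dependency, which follows from the messages' paths being simple and from the sequencing forced by $\OPT$ itself.
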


\ifProofsAppendix 
\begin{proof}[Proof Sketch]
We may assume (without loss of generality) that the optimal schedule~\OPT{} transports each message along a simple path, and we construct a directed multigraph~$G_S$ with the same set of vertices and an arc for every move of an agent in~\OPT{}.
We label every arc by the exact set of messages that were carried during the corresponding move in~\OPT{}.
For every arc in~$G_S$ we add a backwards arc with the same label.

Obviously, every connected component of~$G_S$ is Eulerian, and we claim that any agent in each component can follow some Eulerian tour that allows to deliver all messages.
In particular, the agent needs exactly twice as many moves as the total number of moves of all agents in the component in~\OPT{}.
If we choose the cheapest agent (in terms of weight) in each component, we obtain a tour with at most twice the cost of~\OPT{}.

We compute the Eulerian for the cheapest agent of a component as a combination of multiple tours, respecting arc labels in the following sense:
During every move along a forward arc, the agent carries the exact set of messages prescribed by the arc label, and during every move along a backward arc, the agent does not carry any messages.
This ensures that all messages travel along the same path as in~\OPT{}.
Whenever the agent is at a vertex~$v$ and is missing message~$i$ in order to proceed along some path, this means the current vertex must lie on $i$'s path in~\OPT{}, and thus there must be a path of backwards edges to the current location of~$i$.
The agent follows this path and recursively brings the message back to~$v$.
In the process, more recursive calls may be necessary, but we can prove that there cannot be a circular dependence between messages.

Therefore, the procedure eventually terminates after computing a closed tour.
Note that, so far, the tour is still ``virtual'' in the sense that the agent didn't actually move but merely computed the tour.
We remove the tour from~$G_S$, update all message positions, and recursively apply the procedure starting from the last vertex along the tour that is still adjacent to untraversed edges.
By combining all (virtual) tours that we obtain in the recursion, we eventually get a Eulerian tour for the agent that obeys all arc labels.
This means that the agent can successfully simulate all moves in~\OPT{} while ensuring that it is carrying the required messages before each move.
\end{proof}
\else 
	\ExecuteMetaData[appendix-collaboration.tex]{BoCUB} 
\fi

\subparagraph{Single Message.} For the case of a single message, we can improve the upper bound of~2 on the benefit of collaboration from Theorem~\ref{theo-ub-boc}, to a tight bound of $1/\ln 2 \approx 1.44$.
\ifProofsAppendix  
 The idea of the proof is to use that the weights are non-increasing by Lemma~\ref{lemma:non-increasing}. After scaling appropriately, we assume the message path to be the interval $[0,1]$ and then choose a $b$ such that the function $\tfrac{b}{x+1}$ is a lower bound on the weight of the agent transporting the message at point $x$ on the message path. The intersection of $\tfrac{b}{x+1}$ and the step-function $f$ representing the weight of the agent currently transporting the message then gives an agent, which can transport the message with at most $(1/ \ln 2)$-times the cost of an optimal schedule. 
\fi

\begin{restatable}{thm2}{BoConemessageUB}
	\label{th:BoC}
	There is a $(1/ \ln 2)$-approximation algorithm using a single agent for \ourProblem with~$m=1$.
\end{restatable}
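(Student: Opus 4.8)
\emph{Proof proposal.} The plan is to turn the structural result of Lemma~\ref{lemma:non-increasing} into a purely one-dimensional optimisation. By that lemma an optimal schedule $\OPT$ carries the single message by agents of strictly decreasing weight, each with exactly one pick-up/drop-off pair, so the route taken by the message (which we may assume to be a simple path) splits into consecutive segments, one per participating agent. First I would parametrise this route by arc length and rescale so that its total length equals $1$; write $0=x_0<x_1<\dots<x_k=1$ for the hand-over positions, $\ell_i=x_i-x_{i-1}$ for the length of the $i$-th segment, and $w_i$ for the weight of the agent carrying the message on it. Let $d_i\ge 0$ be the distance this agent travels from its start $p_i$ to its pick-up point, so that $\cost(\OPT)=\sum_{i=1}^k w_i\,(d_i+\ell_i)$ exactly, and let $f(x):=w_i$ on $[x_{i-1},x_i)$ be the non-increasing step function recording the current carrier's weight.

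The algorithm itself is the obvious one: for each agent compute the cost of fetching the message at $s$ and carrying it to $t$, namely $w_i\,(d_G(p_i,s)+d_G(s,t))$, and return the cheapest such single-agent delivery; this needs only shortest-path computations and is polynomial. For the analysis I would single out one good agent. Using $d_G(p_i,s)\le d_i+x_{i-1}$ and $d_G(s,t)\le 1$, agent $a_i$ alone delivers at cost at most $w_i\,(d_i+x_{i-1}+1)$. Setting $b:=\min_i w_i\,(d_i+x_{i-1}+1)$, the minimising agent delivers the whole message by itself at cost at most $b$, hence so does the cheaper agent actually returned by the algorithm. (When all agents start on the path, i.e.\ $d_i=0$, the definition of $b$ says exactly that the hyperbola $b/(x+1)$ lies weakly below $f$ and touches it at the binding agent, matching the picture in the proof sketch.)

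It then remains to show $\cost(\OPT)\ge b\ln 2$. From $w_i\ge b/(d_i+x_{i-1}+1)$ I get
\begin{align*}
\cost(\OPT)=\sum_{i=1}^k w_i\,(d_i+\ell_i)\ \ge\ b\sum_{i=1}^k \frac{d_i+\ell_i}{d_i+x_{i-1}+1}.
\end{align*}
Since $\ell_i=x_i-x_{i-1}\le x_{i-1}+1$ (as $x_i\le 1$), the elementary inequality $\tfrac{a+d}{c+d}\ge\tfrac{a}{c}$ for $0\le a\le c$ and $d\ge 0$ removes the deadhead terms, and a comparison with the integral of $1/(x+1)$ finishes via a telescoping sum:
\begin{align*}
\frac{d_i+\ell_i}{d_i+x_{i-1}+1}\ \ge\ \frac{\ell_i}{x_{i-1}+1}\ \ge\ \int_{x_{i-1}}^{x_i}\frac{dx}{x+1}=\ln\frac{x_i+1}{x_{i-1}+1},\qquad \sum_{i=1}^k \ln\frac{x_i+1}{x_{i-1}+1}=\ln\frac{x_k+1}{x_0+1}=\ln 2.
\end{align*}
Thus $\cost(\OPT)\ge b\ln 2$, while the single-agent solution costs at most $b=\tfrac{1}{\ln 2}\,b\ln 2$, yielding the $(1/\ln 2)$-approximation, which is tight by Corollary~\ref{cor-lb-boc}.

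I expect the main obstacle to be the honest treatment of the deadhead distances $d_i$: the clean hyperbola/integral calculation really concerns agents sitting on the message path, and one must check that letting an agent start off the path can only help. This is precisely the monotonicity $\tfrac{d_i+\ell_i}{d_i+x_{i-1}+1}\ge\tfrac{\ell_i}{x_{i-1}+1}$, which hinges on $\ell_i\le x_{i-1}+1$, and it must be matched by the single-agent upper bound $w_i\,(d_i+x_{i-1}+1)$ using the \emph{same} denominator, so that the two deadhead contributions cancel against each other in the ratio.
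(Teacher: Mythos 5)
Your proposal is correct and follows essentially the same route as the paper's own proof: the same cheapest-single-fetch algorithm, the same comparison of the carrier-weight profile along the (rescaled) message path with the hyperbola $b/(x+1)$, and the same $\int_0^1 \frac{dx}{x+1}=\ln 2$ computation supplying the lower bound on the optimum. The only divergence is bookkeeping of the deadhead distances: the paper defines $b$ from the touching point $x^*$ of the step function and $b/(x+1)$ (ignoring deadheads) and accounts for them additively via $g_{j^*}w_{j^*}\le g$, whereas you fold $d_i$ into the minimization defining $b$ and remove it with the inequality $\tfrac{a+d}{c+d}\ge\tfrac{a}{c}$ --- both treatments are valid.
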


\ifProofsAppendix \else
\ExecuteMetaData[appendix-collaboration.tex]{BoConemessageUB}
\fi

\subparagraph{No Intermediate Dropoffs.} In the following we show that for $\kappa \in \left\{ 1, \infty \right\}$ the upper bound of~$2$ on the benefit of collaboration still holds, 
with the additional property that each message is carried by its single agent without any intermediate dropoffs. 
We will make use of this result later in the approximation algorithm for \ourProblem with $\kappa=1$ (Section~\ref{sec:approx}).

\begin{restatable}{thm2}{nointermediate}
	\label{theo-boc-nointermediate}
	Let $\OPT$ be an optimal schedule for a given instance of \ourProblem with $\kappa \in \left\{ 1,\infty \right\}$. 
	Then there exists a schedule $S$ such that (i) every message is only transported by a single agent, with exactly one pick-up and one drop-off, 
	(ii) $\cost(S)\leq 2 \cdot \cost(\OPT)$, and (iii) every agent $a_j$ returns to its starting location $p_j$.
\end{restatable}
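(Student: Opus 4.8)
The plan is to strengthen Theorem~\ref{theo-ub-boc} by reusing its doubled-multigraph construction and then refining the resulting tour separately for the two capacity regimes. As in that proof I would assume, without loss of generality, that $\OPT$ moves each message along a simple path, and build the directed multigraph $G_S$ on $V$ that contains, for every move of an agent in $\OPT$, a forward arc labeled by the set of messages carried during that move, together with a backward copy carrying nothing. Since its arc set is symmetric, every connected component of $G_S$ is balanced (in-degree equals out-degree at each vertex), and all moves of any single agent lie in one component. Fix a component, let $a$ be a cheapest agent (smallest weight $w_a$) among those starting in it, and note that $p_a$ is a vertex of the component. I will produce in each component a \emph{closed} walk for $a$ that starts and ends at $p_a$, traverses arcs of total length $2L$ (where $L$ is the total length of the $\OPT$-moves in the component), and delivers every message with a single pick-up and a single drop-off. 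Since $w_a\le w_j$ for every agent $j$ active in the component, this costs at most $2\sum_j w_j\cdot(\text{length of }a_j\text{'s moves in the component})=2\,\cost(\OPT|_{\text{comp}})$; summing over components gives~(ii). Agents never chosen stay put, so~(iii) holds for them and, by closedness of the walk, also for each chosen $a$.

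For $\kappa=\infty$ the refinement is immediate. I take the closed Eulerian tour of the component from Theorem~\ref{theo-ub-boc} (which already delivers all messages using one agent and returns to $p_a$) and only change the bookkeeping: for each message $i$, by feasibility its first pick-up occurs at $s_i$ and its final drop-off delivers it to $t_i$, at tour-times $\tau_i^+<\tau_i^-$. I let $a$ pick $i$ up only at $\tau_i^+$ and drop it only at $\tau_i^-$, holding it in between. This is feasible because the capacity is unbounded, it leaves the physical trajectory (hence the cost and the return to $p_a$) untouched, and it yields exactly one pick-up and one drop-off per message.

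For $\kappa=1$ I cannot keep messages in inventory, so I reshape the tour itself. Since the capacity is one, every arc of $G_S$ carries at most one message, and the forward arcs labeled $i$ form exactly the simple path $P_i$ from $s_i$ to $t_i$. I then form an auxiliary multigraph $M$ by contracting, for each $i$, the forward path $P_i$ into a single forward super-arc $e_i=(s_i,t_i)$ whose length equals that of $P_i$, while keeping \emph{all} backward arcs and all empty forward arcs. A short degree count shows that contracting a forward path preserves balance everywhere (at an interior vertex one in- and one out-arc are removed, and at the endpoints the super-arc compensates), so $M$ is balanced; and since the backward copies of all arcs survive, the underlying graph of $M$ stays connected. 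Hence $M$ has an Eulerian circuit, which I start at $p_a$. Expanding every super-arc $e_i$ back into $P_i$ turns this circuit into a closed walk of $a$ whose length equals the total arc length of $M$, which is again $2L$; while expanding $e_i$ the agent carries $i$ from $s_i$ straight to $t_i$, giving a single pick-up and a single drop-off with no intermediate drop-off, as required.

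The main obstacle is the $\kappa=1$ case, namely ensuring that forcing each $P_i$ to be traversed atomically does not disconnect the graph we route on. This is exactly why I contract only the forward paths and retain every backward arc: an interior vertex $u$ of $P_i$ might be the sole gateway to some region of the component, but $u$ is still reachable through, and can still be left by, the backward copy of $P_i$, so no region becomes unreachable. Verifying this combined connectivity-and-balance claim for $M$ (together with the simple-path assumption that makes the forward-$i$ arcs a single path) is the crux; once it is in place, the standard facts that a connected balanced multigraph is Eulerian and that an Eulerian circuit has length equal to its total arc length complete the argument.
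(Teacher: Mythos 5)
Your proof is correct, and while the $\kappa=\infty$ case coincides with the paper's (both take the Eulerian tour of Theorem~\ref{theo-ub-boc} and simply let the agent hold each message from its first pick-up, necessarily at $s_i$, to its last drop-off at $t_i$), your $\kappa=1$ argument takes a genuinely different route. The paper abandons the multigraph $G_S$ for this case: it builds an auxiliary graph whose \emph{vertices are the messages} (plus the cheapest agent's start), with edges weighted by the empty-travel distances between consecutive drop-off/pick-up pairs in $\OPT$, takes a minimum spanning tree, and lets the cheapest agent deliver each message directly from $s_i$ to $t_i$ via a recursive DFS-style traversal that walks every tree edge and every message trajectory twice. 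You instead stay inside $G_S$, contract each carried path $P_i$ into a single super-arc $(s_i,t_i)$, retain all backward and empty arcs, and follow an arbitrary Eulerian circuit of the contracted multigraph based at $p_a$. Your key observation is what makes this work: under unit capacity all labels are singletons, and message $i$ is guaranteed to sit at $s_i$ until the unique traversal of its super-arc (nobody else ever moves it), so \emph{all} ordering constraints disappear and no analogue of the \textsc{fetchMessage}/circular-dependency analysis of Theorem~\ref{theo-ub-boc} is needed; your balance-and-connectivity check of the contracted graph (interior vertices of $P_i$ lose one in- and one out-arc, the endpoints are compensated by the super-arc, and the surviving backward copies keep the component weakly connected, hence Eulerian) is sound, and the cost accounting $w_a\cdot 2L\le 2\sum_j w_j d_j$ per component is exactly right. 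As for what each approach buys: yours unifies both capacity regimes around the single object $G_S$ and is essentially self-contained modulo standard Euler-circuit facts, whereas the paper's MST construction yields a schedule organized as a tree traversal --- precisely the structure reused in Theorem~\ref{thm:planning-restricted} and in the tree-cover approximation algorithm of Section~\ref{sec:approx} --- and its bound (twice the MST plus twice the message paths) can be slightly tighter, since the MST is never longer than $\OPT$'s empty travel.
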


\ifProofsAppendix \else
\ExecuteMetaData[appendix-collaboration.tex]{nointermediate}
\fi

\section{Planning}
\label{sec:planning}

We now look in isolation at the problem of ordering the messages within the schedule of an agent, which we call \emph{Planning}. Formally, the \emph{Planning} aspect of \ourProblem is the following task: Given a schedule $S$ and one of its agents $a_j$, reorder the actions in $S|_{a_j}$ in such a way that the schedule remains feasible and the costs are minimized.

Generally speaking, for a complex schedule with many message handovers, the reordering options for a single agent $a_j$ might be very limited. 
First of all, we must respect the capacity of $a_j$, i.e., in every prefix of $S|_{a_{j}}$, the number of pick-up actions $(a_j,*,*,+)$ cannot exceed the number of drop-off actions $(a_j,*,*,-)$ by more than $\kappa$.
Even then, reordering $S|_{a_j}$ might render $S$ infeasible because of conflicts with some other subschedule $S|_{a_x}$. 
But \emph{Planning} also includes the instances where a single agent delivers all the messages, one after the other straight to the target, and where the only thing that has to be decided is the ordering. We show now that in this setting, where there is no non-trivial \emph{coordination} or \emph{collaboration} aspect, \ourProblem is already \NP-hard.




\begin{restatable}{thm2}{singleagenthardness}
\label{thm:singleagenthardness}
	\emph{Planning} of \ourProblem problem is \NP-hard for all capacities $\kappa$ even for a single agent on a planar graph.
	\label{thm:single-agent-hardness}
\end{restatable}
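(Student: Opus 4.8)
The plan is to reduce from the (standard) \NP-complete problem of deciding whether a planar graph has a Hamiltonian path with a prescribed starting vertex. The first and crucial observation is that for a \emph{single} agent the total cost of a schedule is simply $w_1$ times the distance the agent travels, so \emph{Planning} reduces to minimizing the agent's total travel distance. Consequently the weight $w_1$ and the capacity $\kappa$ will play no role in deciding which ordering is cheapest, and this is exactly what will let one construction handle every $\kappa$ simultaneously. To set up the source problem I would first argue that rooted planar Hamiltonicity is \NP-complete: starting from the \NP-complete $s$-$t$-Hamiltonian-path problem on planar graphs, attach a fresh degree-one pendant $t'$ to $t$; any Hamiltonian path of the resulting planar graph that starts at $s$ must terminate at $t'$, hence realizes an $s$-$t$-Hamiltonian path of the original graph, and conversely.

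The construction itself is as follows. Given a planar graph $H=(V_H,E_H)$ with $n:=|V_H|$ and a designated start vertex $p$, I build the \ourProblem instance by taking $G$ to be $H$ with every edge of unit length, augmented for each $v\in V_H$ by a fresh degree-one pendant $v'$ joined to $v$ by a unit edge (adding pendants preserves planarity). I place a single agent of weight $1$ at $p$, and for each $v\in V_H$ I create one message with source $s_v=v$ and target $t_v=v'$. Delivering message $v$ thus forces the agent to be at $v$ (to pick it up) and later at $v'$ (to drop it), so a feasible plan is precisely a walk from $p$ that visits every $v$ and subsequently its pendant $v'$.

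For correctness I would analyze the minimum walk length. Each pendant edge is traversed at least twice, except possibly the single pendant edge incident to the agent's final position, so the excursions to the $v'$ cost at least $2n-1$ regardless of the route taken inside $H$; meanwhile the part of the walk moving among the vertices of $V_H$ must visit all $n$ of them starting from $p$ and hence uses at least $n-1$ unit edges, with equality exactly when it traces a Hamiltonian path of $H$ from $p$. Therefore the cheapest plan has length at least $3n-2$, and equals $3n-2$ if and only if $H$ has a Hamiltonian path starting at $p$; otherwise the minimum jumps to at least $3n-1$. Since the excursions can be interleaved with the covering walk even under $\kappa=1$ (pick up at $v$, immediately drop at $v'$, then move on), the threshold $3n-2$ is achieved for every capacity $\kappa\geq 1$.

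The delicate points, and where I expect the real work to lie, are twofold. First is the clean justification that rooted planar Hamiltonicity is \NP-complete via the pendant gadget above. Second, and more subtle, is verifying that a larger capacity can never beat the capacity-$1$ optimum: this holds because every message pins down a mandatory visit to a fixed pair of adjacent nodes, so the set of nodes that must be visited, and hence the length of the optimal covering walk, is independent of how many messages may be carried at once. Establishing this $\kappa$-independence rigorously is what makes the statement hold \emph{for all capacities}. (If one wanted the stronger $\tfrac{367}{366}$ inapproximability mentioned in the introduction rather than mere \NP-hardness, the Hamiltonicity reduction would be replaced by a gap-preserving reduction from a bounded-degree metric TSP variant, but that is not needed for the statement proved here.)
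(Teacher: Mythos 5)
Your proposal is correct and is essentially the paper's own reduction: both constructions place one message per vertex of a planar Hamiltonicity instance, with source $v$ and target a fresh pendant vertex $v'$, so that a cheapest plan for the single agent traces a Hamiltonian traversal, and both note that the lower-bound counting is independent of the capacity $\kappa$. The only differences are cosmetic: the paper reduces from Hamiltonian \emph{cycle} on grid graphs, giving the pendant edges length $0$ and attaching one long pendant edge at the start to force the agent to end there, whereas you reduce from Hamiltonian path with a prescribed start vertex (justified by a standard pendant argument) and use unit-length pendant edges with the exact threshold $3n-2$.
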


\ifProofsAppendix 
The hardness follows by a reduction from Hamiltonian cycles on a grid graph $H$, a problem shown to be \NP-hard by Itai et al.~\cite{NPonGrid}:
We put an isolated message at every node of $H$, forcing the agent to visit each node exactly once. A longer isolated edge at the start forces the agent to come back to the start at the end, see Figure~\ref{fig:hamilton-cycle-reduction}.
\begin{figure}[t!]
	\centering
	\includegraphics[width=\linewidth]{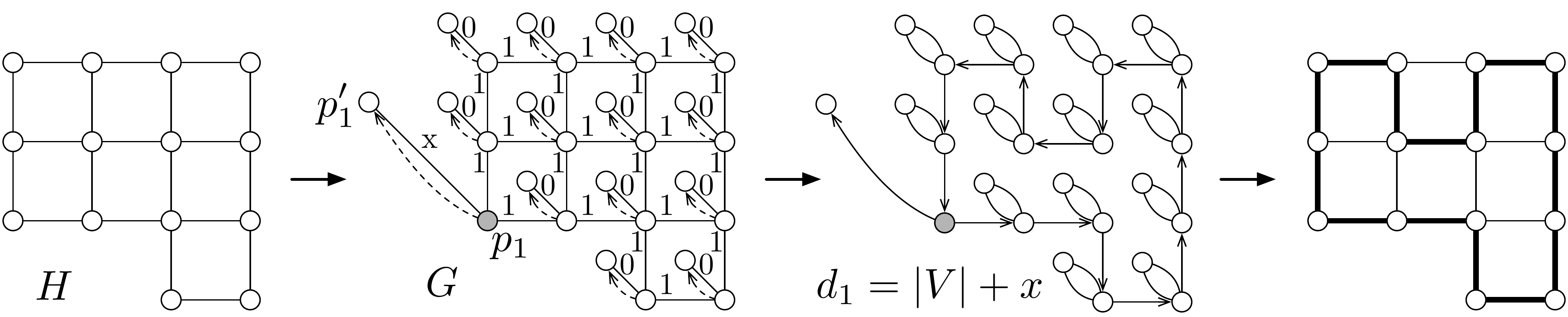}
	\caption{Finding a Hamiltonian cycle via \ourProblem with a single agent. Picking $x$ to be large enough, e.g. $x = \lvert V \rvert$, allows us to enforce that the agent will end in $p_1'$.}
	\label{fig:hamilton-cycle-reduction}
\end{figure}
\else
\ExecuteMetaData[appendix-planning.tex]{singleagenthardness}
\fi

Using similar ideas, we can use recent results for the approximation hardness of metric TSP~\cite{karpinski2015new} to immediately show that \emph{Planning} of \ourProblem 
can not be approximated arbitrarily well, unless $\mathrm{P} = \mathrm{NP}$.

\begin{restatable}{thm2}{approximationhardness}
	It is \NP-hard to approximate the \emph{Planning} of \ourProblem to within any constant approximation ratio less than 367/366.
\end{restatable}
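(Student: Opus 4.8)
The plan is to give a gap-preserving reduction from metric TSP, reusing the forced-visit idea behind Theorem~\ref{thm:single-agent-hardness}. Starting from a metric TSP instance, realized as the shortest-path metric of an edge-weighted graph $G$, I build a single-agent \emph{Planning} instance of \ourProblem by placing one \emph{isolated} message at each node of $G$, i.e. a message with $s_i=t_i$ at that node. Each such message is picked up and immediately dropped off, so the capacity never binds and the construction is valid for every $\kappa$; a unit-weight agent starting at a fixed node $r$ must therefore touch every node of $G$ in any feasible schedule. Exactly as in the Hamiltonicity reduction, I attach a mandatory pendant gadget that forces the agent to close its route into a tour through $r$. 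Hence the optimal Planning cost equals the length of a shortest closed walk of $G$ visiting all nodes --- the optimal TSP tour on the metric closure of $G$ --- plus the fixed traversal cost $C$ of the gadget.

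Next I would invoke the inapproximability of metric TSP of Karpinski, Lampis and Schmied~\cite{karpinski2015new}: there is a value $L$, computable from the instance, such that it is \NP-hard to distinguish instances with optimum at most $L$ from instances with optimum at least $\tfrac{123}{122}L$. Under the reduction these two cases become Planning optima $L+C$ and $\tfrac{123}{122}L+C$. The crucial calibration is to choose the gadget cost $C=2L$: the yes- and no-case optima then equal $3L$ and $(\tfrac{123}{122}+2)L=\tfrac{367}{122}L$, whose ratio is precisely $\bigl(\tfrac{123}{122}+2\bigr)/3=\tfrac{367}{366}$. Consequently a polynomial-time $\rho$-approximation for \emph{Planning} with $\rho<\tfrac{367}{366}$ would separate the two TSP cases and imply $\mathrm{P}=\mathrm{NP}$.

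The main obstacle is keeping the reduction genuinely gap-preserving while pinning down the exact constant. I would need to verify that (i) forcing a \emph{closed} tour rather than an open walk introduces no slack, so that the Planning optimum is exactly the TSP-tour length plus $C$ in both cases; (ii) the gadget realizing the additive term $2L$ is mandatory and costs the same in every feasible schedule, independent of the order in which messages are served, so that it contributes a pure additive constant and does not interact with the tour itself; and (iii) the metric is realized by a polynomially sized graph without distance distortion --- which holds since the instances of~\cite{karpinski2015new} are themselves shortest-path metrics of sparse graphs. Granting these, the algebraic identity $\bigl(\tfrac{123}{122}+2\bigr)/3=\tfrac{367}{366}$ delivers the claimed bound, and planarity can be retained as in Theorem~\ref{thm:single-agent-hardness} if desired.
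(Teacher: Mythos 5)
Your proposal follows essentially the same route as the paper: place messages that force a single unit-weight agent to visit every node of a hard metric TSP instance from~\cite{karpinski2015new}, attach an expensive pendant gadget at the start that forces the schedule to close into a tour through the starting vertex, and transfer the $123/122$ gap. However, one implementation step as written would fail: realizing the forced visits as messages with $s_i=t_i$ is degenerate. A message whose source coincides with its target is already at its destination, so a feasible schedule need not visit that node at all, and your instance collapses --- the agent only has to serve the pendant message. The paper (in its proof of Theorem~\ref{thm:single-agent-hardness}, which it reuses here) instead duplicates each vertex $v$ into a pendant vertex $v'$ joined by a zero-length edge and ships a message from $v$ to $v'$; this forces a genuine visit at zero extra cost. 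You need this (or an equivalent device) for the reduction to be non-vacuous; since you explicitly invoke that construction, the fix is immediate, but the $s_i=t_i$ shortcut itself is not sound.

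The other difference is the calibration of the gadget, and there your variant is valid but relies on a slightly different form of the hardness result. You set the pendant length to $2L$, where $L$ is the yes-threshold of the TSP gap problem, getting the exact arithmetic $\bigl(\tfrac{123}{122}+2\bigr)/3=\tfrac{367}{366}$; this requires the Karpinski--Lampis--Schmied result in explicit gap form, with $L$ available to the reduction. The paper instead sets the pendant length to $2M$, where $M$ is the cost of a minimum spanning tree of $H$, and uses $M\le \mathrm{TSP}\le 2M$ twice: first to force optimal schedules to end at $p_1'$ (a schedule crossing the pendant edge twice costs at least $5M$, versus at most $4M$ for the optimum), and second to argue that the pendant edge is at most two thirds of the optimal schedule cost, so one third of the TSP gap survives, giving $1+\tfrac{1}{3\cdot 122}=\tfrac{367}{366}$. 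The MST-based choice makes the reduction purely approximation-preserving without reference to any threshold; your choice makes the constant fall out of a cleaner identity. Your verification points (i)--(iii) are exactly what the paper checks. One last nit: planarity cannot be ``retained'' here, since the metric TSP instances of~\cite{karpinski2015new} are not planar; the paper claims planarity only for the Hamiltonicity-based hardness, not for this inapproximability bound.
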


\ifProofsAppendix \else
\ExecuteMetaData[appendix-planning.tex]{approximationhardness}
\fi

\subsection{Polynomial-time Approximation for Planning in Restricted Settings}

Motivated by Theorem~\ref{theo-boc-nointermediate}, we now look at the restricted setting of planning for a feasible schedule $S^R$ 
of which we know that each message is completely transported by some agent $a_j$ without intermediate drop-offs, 
i.e., for every message~$m_i$ there must be an agent~$j$ with $S^R|_{m_i} = (a_j, s_i, m_i, +), (a_j,t_i, m_i, -)$.
This allows us to give polynomial-time approximations for planning with capacity $\kappa \in \left\{ 1,\infty \right\}$:

\begin{restatable}{thm2}{restrictedplanning}
	\label{thm:planning-restricted}
	Let $S^R$ be a feasible schedule for a given instance of \ourProblem with the restriction that $\forall i\ \exists j:\ S^R|_{m_i} = (a_j, s_i, m_i, +), (a_j,t_i, m_i, -)$.
	Denote by $\OPT(S^R)$ a reordering of $S^R$ with optimal cost.
	There is a polynomial-time planning algorithm $\ALG$ which gives a reordering $\ALG(S^R)$ such that
	$\cost(\ALG(S^R)) \leq 2\cdot \cost(\OPT(S^R))$ if $\kappa=1$ and	
	$\cost(\ALG(S^R)) \leq 3.5\cdot \cost(\OPT(S^R))$ if $\kappa=\infty$.
\end{restatable}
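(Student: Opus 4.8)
The plan is to reduce the restricted planning problem to a well-understood routing problem. Since each message $m_i$ is entirely transported by a single agent $a_j$ with exactly one pick-up at $s_i$ and one drop-off at $t_i$, the reordering problem for each agent $a_j$ decomposes: agent $a_j$ must start at $p_j$, service a prescribed set of (source, target) pairs assigned to it, and we only get to choose the order. Crucially, the agents are independent here because there are no handovers, so $\cost(\OPT(S^R)) = \sum_j \cost(\OPT(S^R|_{a_j}))$, and it suffices to solve the single-agent planning subproblem and sum up. So first I would fix a single agent $a_j$ with its assigned pickup-delivery pairs and its start $p_j$, and treat this as the core task.

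For the case $\kappa=1$, the agent can carry at most one message at a time, so it must, for each message in turn, go to $s_i$, then to $t_i$, and only then pick up the next message. This is exactly a \emph{stacker-crane}-type problem: we have a set of directed arcs (the $s_i \to t_i$ moves are forced) plus free movement between a drop-off and the next pick-up. The standard approach here is to build an Eulerian-type structure: form a graph whose mandatory arcs are the message arcs $s_i\to t_i$, add a matching/shortest-path connections between consecutive serviced pairs, and argue via a Christofides-style or doubling argument. For the factor $2$, I expect the cleanest route is a doubling argument on a minimum spanning structure: take the shortest-path metric restricted to $\{p_j\}\cup\{s_i,t_i\}$, note every message arc must be traversed in $\OPT(S^R|_{a_j})$, build an MST (or shortest path tree rooted at $p_j$) over the relevant terminals, double it to make it Eulerian, and shortcut; the doubled tree cost is at most $2$ times a lower bound on $\OPT$. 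The message arcs themselves are a fixed additive cost appearing in both $\ALG$ and $\OPT$, so they can be handled separately and do not blow up the ratio.

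For the case $\kappa=\infty$, the agent may pick up arbitrarily many messages before delivering any, so the structure is richer: it is essentially a Dial-a-Ride / pickup-and-delivery problem with unlimited capacity, for which the known polynomial-time approximation (Charikar--Raghavachari / Frederickson-type results for the stacker-crane problem) gives a factor that combines a TSP-tour component and a matching component. The target factor $3.5$ strongly suggests combining a $1.5$-approximation (Christofides) on a spanning tour through the terminals with a doubling argument ($2$) on the forced pickup-delivery arcs, or splitting the agent's motion into a ``collection'' phase and a ``delivery'' phase and bounding each. I would set up a lower bound on $\OPT(S^R|_{a_j})$ in terms of (a) the cost of visiting all terminals and (b) the total message-arc length, then describe an algorithm (e.g.\ a Christofides tour plus a routing that respects the unbounded capacity) whose cost is provably at most $3.5$ times this bound.

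The main obstacle I anticipate is the $\kappa=\infty$ case: with unlimited capacity, the precedence constraint (each $s_i$ must be visited before its $t_i$) is what makes this genuinely a Dial-a-Ride problem rather than a plain TSP, and controlling the interaction between the precedence-forced arcs and the free tour component is exactly where the $3.5$ factor comes from and where a naive argument tends to over-count. I would carefully separate the cost into the part that is common to every feasible schedule (the mandatory $s_i\to t_i$ traversals, which contribute equally to $\ALG$ and $\OPT$) and the part that depends on routing order, and bound only the latter by the TSP/matching machinery. By contrast, the $\kappa=1$ case and the reduction to independent single-agent subproblems are routine once the decomposition $\cost(\OPT(S^R)) = \sum_j \cost(\OPT(S^R|_{a_j}))$ is established, so I would present those first and reserve the detailed combinatorial argument for $\kappa=\infty$.
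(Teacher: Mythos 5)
Your per-agent decomposition is correct (it is exactly the paper's first step), and your $\kappa=\infty$ plan is essentially the paper's proof: the paper splits the motion of agent $a_j$ into a \emph{collection} phase (double an MST on $\left\{ p_j \right\} \cup \left\{ s_{j1},\ldots,s_{jx}\right\}$ and do a DFS traversal, cost at most $2\cdot\cost(\OPT(S^R)|_{a_j})$ since the optimum shortcuts to a spanning path on these nodes) and a \emph{delivery} phase (a Hoogeveen/Christofides TSP path with fixed start $p_j$ on $\left\{ p_j\right\}\cup\left\{ t_{j1},\ldots,t_{jx}\right\}$, cost at most $\tfrac{3}{2}\cdot\cost(\OPT(S^R)|_{a_j})$), giving $3.5$. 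So the case you flagged as the hard one is fine in outline.

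The genuine gap is in the $\kappa=1$ case, which you dismiss as routine. You propose to double a plain MST over the terminals and to treat the message arcs as ``a fixed additive cost appearing in both $\ALG$ and $\OPT$'' that ``can be handled separately.'' This does not yield factor $2$. Under capacity $1$, the reordered schedule must visit $s_{ji}$ and then immediately $t_{ji}$, and shortcutting an Euler tour of a doubled MST to enforce this pairing is not cost-free: when the tour reaches $s_{ji}$, the agent must detour to $t_{ji}$ and in general return, so the algorithm traverses the message arcs twice \emph{in addition to} the doubled tree. Writing $M$ for the MST weight and $F$ for the total message-arc length, the bound you can actually prove this way is $2M+2F \leq 4\cdot\cost(\OPT(S^R)|_{a_j})$, not $2$. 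The missing idea, which is the heart of the paper's proof, is to force the message edges into the spanning structure: run Kruskal starting from the already-included edges $(s_{ji},t_{ji})$ and then add connecting edges greedily. The optimal reordering is a connected spanning subgraph of the auxiliary metric graph that \emph{contains every edge} $(s_{ji},t_{ji})$, so this forced-edge tree has length at most $\cost(\OPT(S^R)|_{a_j})$; moreover, a DFS traversal of it is automatically feasible for $\kappa=1$ --- the agent carries $m_{ji}$ exactly while crossing the tree edge from $s_{ji}$ to $t_{ji}$ and recrosses it empty --- so doubling the whole tree, message edges included, gives the claimed factor $2$ in one stroke. Your stacker-crane fallback does not rescue the argument either: the known $1.8$-approximation applies to the closed-tour version in which the agent returns to $p_j$ (this is precisely the paper's remark after the theorem), not to the open-path version the theorem is about.
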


\begin{proof} 
	By the given restriction, separate planning of each $S^R|_{a_j}$ independently maintains feasibility of $S^R$. We denote by $m_{j1}, m_{j2}, \ldots, m_{jx}$ the messages appearing in $S^R_{a_j}$. 
	We define a complete undirected auxiliary graph $G'=(V',E')$ on the node set 
	$V' = \left\{ p_j \right\} \cup \left\{ s_{j1}, s_{j2}, \ldots, s_{jx} \right\} \cup \left\{ t_{j1}, \ldots, t_{jx} \right\}$ with edges $(u,v)$ having weight $d_G(u,v)$.

	For $\underline{\kappa = 1}$, the schedule $\OPT(S^R)|_{a_j}$ corresponds to a Hamiltonian path $H$ in $G'$ of minimum length, starting in $p_j$, 
	subject to the condition that for each message $m_{ji}$ the visit of its source $s_{ji}$ is directly followed by a visit of its destination $t_{ji}$.
	We can lower bound the length of $H$ with the total length of a spanning tree $T'=(V', E(T)') \subseteq G'$ as follows: 
	Starting with an empty graph on $V'$ we first add all edges $(s_{ji},t_{ji})$. 
	Following the idea of Kruskal~\cite{Kruskal56}, we add edges from 
	$\left\{ p_j \right\} \times \left\{ s_{j1}, \ldots, s_{jx} \right\} \cup \left\{ t_{j1}, \ldots, t_{jx} \right\} \times \left\{ s_{j1}, \ldots, s_{jx} \right\}$ 
	in increasing order of their lengths, disregarding any edges which would result in the creation of a cycle.
	Now a DFS-traversal of $T'$ starting from $p_j$ visits any edge $(s_{ji},t_{ji})$ in both directions. Whenever we cross such an edge from $s_{ji}$ to $t_{ji}$, 
	we add $(a_j,s_{ji},m_{ji},+),(a_j,t_{ji},m_{ji},-)$ as a suffix to the current schedule $\ALG(S^R)|_{a_j}$. 
	We get an overall cost of $\cost(\ALG(S^R)|_{a_j}) \leq 2\cdot \sum_{e \in E(T')} l_e \leq 2 \cdot \sum_{e \in H} = 2\cdot \cost(\OPT(S^R)|_{a_j})$.
\ifProofsAppendix 
	For $\underline{\kappa = \infty}$, the idea is to first collect all messages by traversing a spanning tree (with cost $\leq 2\cdot \cost(\OPT(S^R)|_{a_j})$) 
	and then delivering all of them in a metric TSP path fashion (with cost $\leq \tfrac{3}{2}\cdot \cost(\OPT(S^R)|_{a_j})$), see Appendix~\ref{app:planning} for details. 
\else
\ExecuteMetaData[appendix-planning.tex]{restrictedplanning}
\fi 
\end{proof}

\begin{remark}
	If we assume as an additional property that the agent returns to its starting position $p_j$ (as for example in the result of Theorem~\ref{theo-boc-nointermediate}), we can get a better approximation for the case $\kappa=1$.
	Instead of traversing a spanning tree twice, we can model this as the \emph{stacker-crane problem} for which a polynomial-time $1.8$-approximation is known~\cite{FraGKP04}.
\end{remark}

\section{Coordination}
\label{sec:message-order-hardness}
In this section, we restrict ourselves to the \emph{Coordination} aspect of \ourProblem. 
We assume that collaboration and planning are taken care of. 
More precisely, we are given a sequence containing the complete \emph{fixed} schedule $S^-$ of all actions
$(\text{\textunderscore}, s_i, m_i, +),\dots, (\text{\textunderscore}, h, m_i,-),$ $\ldots,(\text{\textunderscore}, t_j, m_j, -),$ but without an assignment of the agents to the actions. 
Coordination is the task of assigning agents to the given actions. 
Even though coordination appears to have the flavor of a matching problem, it turns out to be \NP-hard to optimally match up agents with the given actions.
This holds for any capacity, in particular for $\kappa=1$. The latter, however, has a polynomial-time solution if all agents have uniform weight.

\subsection{NP-Hardness for Planar Graphs}

We give a reduction from planar 3SAT: From a given planar 3SAT formula $F$ we construct an instance of \ourProblem 
that allows a schedule $S$ with ``good'' energy \cost($S$) if and only if $F$ is satisfiable.


\subparagraph{Planar 3SAT.} Let $F$ be a three-conjunctive normal form (3CNF) with $x$ boolean variables $V(F)=\left\{ v_1, \dots, v_x \right\}$ and $y$
clauses $C(F)=\left\{ c_1, \dots, c_y \right\}$. Each clause is given by a subset of at most three literals of the form $l(v_i) \in \left\{ v_i,
\overline{v_i} \right\}$. We define a corresponding graph $H(F) = (N,A)$ with a node set consisting of all clauses and all variables ($N = V(F) \cup C(F)$).
We add an edge between a clause $c$ and a variable $v$, if $v$ or $\overline{v}$ is contained in $c$. Furthermore we add a cycle consisting of edges
between all pairs of consecutive variables, i.e., $A = A_1 \cup A_2$, where
$\ A_1 = \left\{ \left\{ c_i, v_j \right\} \ | \ v_j \in c_i \text{ or } \overline{v_j} \in c_i \right\}, 
 \ A_2 = \left\{ \left\{ v_j, v_{(j \bmod x)+1} \right\} \ | \ 1\leq j \leq x \right\}.$
We call $F$ \emph{planar} if there is a plane embedding of $H(F)$. The \emph{planar 3SAT} problem of deciding whether a given planar 3CNF $F$ is
satisfiable is known to be \NP-complete. Furthermore the problem remains \NP-complete \emph{if at each variable node} the plane
embedding is required to have all arcs representing positive literals on one side of the cycle $A_2$ and all arcs representing negative literals on
the other side of $A_2$~\cite{planar3sat82}. We will use this \emph{restricted version} in our reduction and assume without loss of generality that
the graph $H(F)\setminus A_2$ is connected and that $H(F)$ is a simple graph (i.e. each variable appears at most once in every clause).

\subparagraph{Building the Delivery Graph.} We first describe a way to transform any planar 3CNF graph $H(F)$ into a planar delivery graph $G
= G(F)$, see Figure~\ref{fig:planar3sat}.
%
\begin{figure}[t!]
	\centering
	\includegraphics[width=\linewidth]{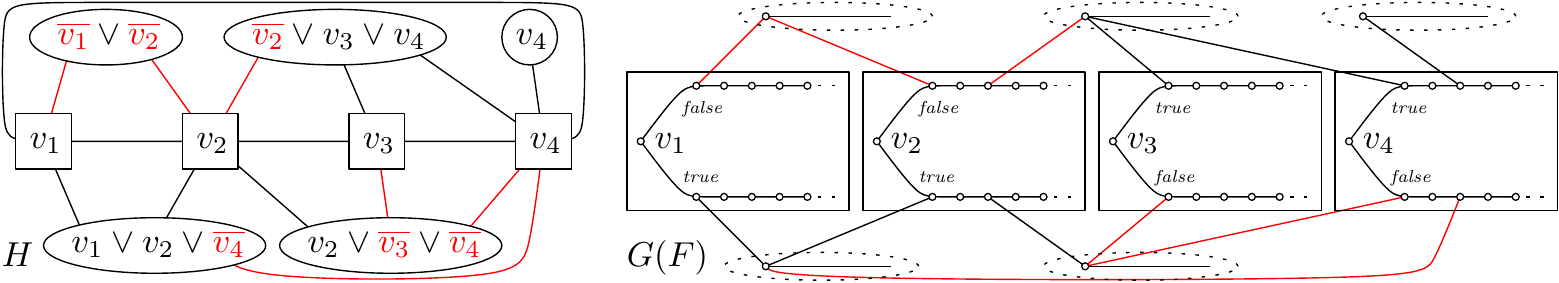}
	\caption{(left) A restricted plane embedding of a 3CNF $F$ which is satisfied by $(v_1,v_2,v_3,v_4) = (true, false, false, true)$. 
	(right) Its transformation to the corresponding delivery graph.}
	\label{fig:planar3sat}
\end{figure}
We transform the graph in five steps: First we delete all edges of the cycle $A_2$, but we keep in mind that at each variable node all positive
literal edges lie on one side and all negative literal edges on the other side. Secondly let $\deg_{H(F),A_1}(v)$ denote the remaining degree of a
variable node $v$ in $H$ and surround each variable node by a \emph{variable box}. A variable box contains two paths adjacent to $v$ on which
internally we place $\deg_{H(F),A_1}(v)$ copies of $v$: One path (called henceforth the \emph{$\mathit{true}$-path}) contains all nodes having an
adjacent positive literal edge, the other path (the \emph{$\mathit{false}$-path}) contains all nodes having an adjacent negative literal edge. In a
next step, we add a single node between any pair of node copies of the previous step. As a fourth step, we want all paths to contain the same number
of nodes, hence we fill in nodes at the end of each path such that every path contains exactly $2y \geq 2\deg_{H(F),A_1}(v)$ internal nodes. Thus each
variable box contains a variable node $v$, an adjacent $\mathit{true}$-path (with internal nodes $v_{\mathit{true},1}, \ldots, v_{\mathit{true},2y-1}$
and a final node $v_{\mathit{true},2y}$) and a respective $\mathit{false}$-path.
Finally for each clause node $c$ we add a new node $c'$ which we connect to $c$. 
The new graph $G(F)$ has polynomial size and all the steps can be implemented in such a way that $G(F)$ is planar.

\subparagraph{Messages, Agents and Weights.} We are going to place one \emph{clause message} on each of the $y$ clause nodes and a \emph{literal
message} on each of the $2x$ paths in the variable boxes for a total of $4xy$ messages. More precisely, on each original clause node $c$ we place
exactly one clause message which has to be delivered to the newly created node $c'$. Furthermore we place a literal message on every internal node
$v_{\mathit{true},i}$ of a $\mathit{true}$-path and set its target to $v_{\mathit{true},i+1}$ (same for the $\mathit{false}$-path).
We set the length of all edges connecting a source to its target to 1.

Next we describe the locations of the agents in each variable box. We place one \emph{variable agent} of weight 1 on the variable node $v$.  The
length of the two adjacent edges are set to $\varepsilon$, where $\varepsilon:= (8xy)^{-2}$. Furthermore we place $y$ \emph{literal agents} on each
path: The $i$-th agent is placed on $v_{\mathit{true},2(y-i)}$ (respectively $v_{\mathit{false},2(y-i)}$) and gets weight $1+i\varepsilon$.  It
remains to set the length of edges between clause nodes and internal nodes of a path. By construction the latter is the starting position of an agent
of uniquely defined weight $1+i\varepsilon$; we set the length of the edge to $\tfrac{1-i\varepsilon}{1+i\varepsilon}$. For an illustration see
Figure~\ref{fig:weight-hardness}, where each agent's starting location is depicted by a square and each message is depicted by a colored arrow.

\begin{figure}[t!]
	\centering
	\includegraphics[width=\linewidth]{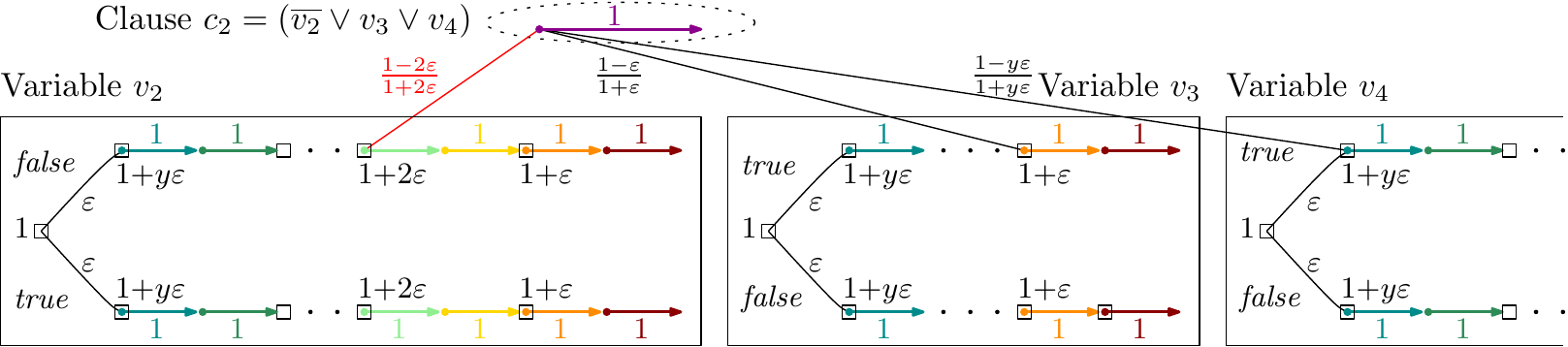}
	\caption{Agent positions ($\square$) and weights (in black); Messages ($\rightarrow$) and edge lengths (in color).}
	\label{fig:weight-hardness}
\end{figure}

\subparagraph{Reduction.} The key idea of the reduction is that for each variable $u$, the corresponding variable box contains a variable agent who can
\emph{either} deliver all messages on the $\mathit{true}$-path (thus setting the variable to true), \emph{or} deliver all messages on the
$\mathit{false}$-path (thus setting the variable to false). Assume $u$ is set to true. If $u$ is contained in a clause $c$, then on the adjacent node 
$v_{\mathit{true},i}$ there is a (not yet used) literal agent. Intuitively, this agent was \emph{freed by the variable agent} and can thus be sent to 
deliver the clause-message. If $\overline{u}$ is contained in $c$, the corresponding literal on the $\mathit{false}$-path can't be sent to deliver the
clause message, since it needs to transport messages along the $\mathit{false}$-path.

\ifProofsAppendix 
There is such a feasible schedule $\SAT$ of the agents in $G(F)$ if and only if there is a satisfiable assignment (a solution) for the variables of a 3CNF $F$.
Its total (energy) cost is $\cost(\SAT) := 4xy + 2y + x(y^2+y+1)\varepsilon$ (Appendix~\ref{app:coordination}: Lemma~\ref{lem:optimum-cost}).
Furthermore, we can show that \emph{any} schedule $S$ which doesn't correspond to a satisfiable variable assignment has cost $\cost(S) > \cost(\SAT)$ 
(Appendix~\ref{app:coordination}: Lemmata~\ref{lem:variablebox-independence},~\ref{lem:agent-movement} and~\ref{lem:optimum-schedule-cost}).
This is true independent of whether $S$ adheres to the schedule without agents $S^-$ or not and holds for any capacity $\kappa$.


\subparagraph{Fixed Sequence (Schedule without Agent Assignment).} It remains to fix a sequence $S^-$ that describes the schedule $\SAT$ described in the reduction idea but which does not allow us to infer a satisfiable assignment:
This is the case for any $S^-$ consisting of consecutive pairs $(\text{\textunderscore}, s_i, m_i, +), (\text{\textunderscore}, t_i, m_i, -)$
such that if $m_i$ lies to the left of $m_j$ on some $\mathit{true}$- or $\mathit{false}$-path, it precedes $m_j$ in the schedule.  

\drop{
\subparagraph{Energy Consumption of Optimal Schedules.} It is possible to show (Lemmata~\ref{lem:variablebox-independence},~\ref{lem:agent-movement} and~\ref{lem:optimum-schedule-cost})
that \emph{any} schedule $S$ has cost $\cost(S) \geq \cost(\SAT)$. This is true independent of whether $S$ adheres tho the fixed schedule without assignments $S^-$ or not and holds for any capacity $\kappa$. 
In case of equality, $S$ has exactly the properties as described in the proof of Lemma~\ref{lem:optimum-cost}.
Since thus for each agent $a_j$ the subsequence $S|_{a_j}$ is uniquely determined, and since each message is transported by a single agent (without intermediate drop-offs),
these subsequences can be merged to match the prescribed fixed order of the actions $S^-$.

\subparagraph{Optimum schedules.} If an optimum schedule $\OPT$ (restricted to the prescribed schedule without assignment) has cost $\cost(\OPT) = \cost(\SAT)$, we get a satisfiable assignment of the variables.
If, however, an optimum schedule $\OPT$ has cost $\cost(S) > \cost(\SAT)$, then there is no satisfiable assignment of the variables. We conclude:
}

\else
\ExecuteMetaData[appendix-coordination.tex]{messageorderhardnessthm1}
\ExecuteMetaData[appendix-coordination.tex]{messageorderhardnessthm2}
\fi

\begin{restatable}{thm2}{messageorderhardnessthm}
	\emph{Coordination} of \ourProblem is \NP-hard on planar graphs for all capacities $\kappa$, even if we are given prescribed collaboration and planning.
	\label{thm:message-order-hardness}
\end{restatable}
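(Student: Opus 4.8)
The plan is to prove Theorem~\ref{thm:message-order-hardness} via the reduction from planar 3SAT already sketched in the preceding text, and the main work is to establish the two directions of the \emph{if and only if} between satisfiability of the restricted planar 3CNF $F$ and the existence of a schedule of cost at most $\cost(\SAT) = 4xy + 2y + x(y^2+y+1)\varepsilon$. The construction of $G(F)$, the message/agent placement, and the fixed sequence $S^-$ are all given, so I would organize the argument around a single threshold cost and argue both the soundness and completeness of the reduction with respect to this threshold. Crucially, I would argue the hardness independently of whether a schedule respects $S^-$, so that the reduction simultaneously covers the case \emph{with} prescribed collaboration and planning (the claim in the theorem) and the unconstrained case.

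First I would handle the \emph{completeness} direction: given a satisfying assignment of $F$, I would exhibit the schedule $\SAT$. For each variable $u$ set to \emph{true}, the variable agent (weight $1$, sitting at $v$ with incident edges of length $\varepsilon$) delivers all literal messages along the $\mathit{true}$-path, which frees up each literal agent on that path; a freed literal agent of weight $1+i\varepsilon$ can then be dispatched along its clause edge of length $\tfrac{1-i\varepsilon}{1+i\varepsilon}$ to deliver the clause message, contributing exactly $(1+i\varepsilon)\cdot\tfrac{1-i\varepsilon}{1+i\varepsilon} = 1-i\varepsilon$, roughly unit cost per clause. On the $\mathit{false}$-path of the same variable the literal agents must instead shuttle the literal messages forward, so they are \emph{not} available for clause delivery. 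Since the assignment satisfies every clause, at least one incident literal agent per clause is freed, so every clause message can be delivered. I would then sum the contributions — $4xy$ for the literal messages transported at unit cost, $2y$ for the clause messages (each traversing a length-one $c$--$c'$ edge after being brought to $c$), plus the $\varepsilon$-order correction $x(y^2+y+1)\varepsilon$ coming from the variable-agent moves and the weight perturbations — to obtain precisely $\cost(\SAT)$.

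For \emph{soundness} I would prove the contrapositive: any schedule $S$ (respecting $S^-$ or not, for any capacity $\kappa$) that delivers all messages has $\cost(S) \geq \cost(\SAT)$, with equality forcing a structure that encodes a satisfying assignment. This is where I expect the main obstacle, and I would break it into the three lemmas foreshadowed in the text. The \emph{variable-box independence} lemma should argue that the $\varepsilon$-scaling makes it never profitable for agents to cross between distinct variable boxes or to leave their box for foreign clause edges, so the cost decomposes additively over boxes plus the fixed clause-delivery cost. The \emph{agent-movement} lemma should show that within a box, because the literal messages tile each path and must each be moved one unit, the cheapest way to deliver a whole path is to have its variable agent sweep it, and that committing the variable agent to one path strictly prevents freeing literal agents on the other path; the carefully chosen weights $1+i\varepsilon$ together with edge lengths $\tfrac{1-i\varepsilon}{1+i\varepsilon}$ ensure that any clause message is cheapest to deliver by its incident freed literal agent and that unfreed agents incur a strictly larger cost. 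The \emph{optimum-schedule-cost} lemma then combines these to show that achieving cost exactly $\cost(\SAT)$ forces, in each box, an all-true or all-false sweep (a consistent Boolean value) and forces every clause to receive a freed agent — i.e. a satisfying assignment.

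Finally I would assemble the pieces. Combining completeness and soundness, a schedule of cost at most $\cost(\SAT)$ exists if and only if $F$ is satisfiable; since the threshold and all edge lengths are polynomially bounded and $G(F)$ is planar and constructible in polynomial time, this is a polynomial-time many-one reduction from the \NP-complete restricted planar 3SAT, yielding \NP-hardness. Because the lower bound $\cost(S)\geq\cost(\SAT)$ was proved without assuming $S$ follows $S^-$, and the witnessing schedule $\SAT$ can be made to follow the prescribed order $S^-$ (consecutive source/target pairs ordered left-to-right along each path), the hardness holds both in the \emph{Coordination}-restricted setting with given collaboration and planning and in the general setting, and for every capacity $\kappa$ including $\kappa=1$. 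I expect the $\varepsilon$-bookkeeping in the soundness lemmas to be the most delicate step, since the whole separation between satisfiable and unsatisfiable instances lives in the $O(\varepsilon)$ term and one must verify that no clever rerouting can recover the lost $\varepsilon$-savings.
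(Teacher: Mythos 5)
Your proposal is correct and follows essentially the same route as the paper's own proof: the completeness direction matches Lemma~\ref{lem:optimum-cost} (the schedule $\SAT$ of cost $4xy+2y+x(y^2+y+1)\varepsilon$, with each clause delivered at cost exactly $2$ by a freed literal agent), and your three soundness lemmas correspond to the paper's Lemmata~\ref{lem:variablebox-independence}, \ref{lem:agent-movement} and \ref{lem:optimum-schedule-cost}, including the key points that the lower bound holds irrespective of $S^-$ and any capacity $\kappa$, and that $\SAT$ can be ordered to conform to $S^-$. The only cosmetic difference is that you attribute the variable-box independence to the $\varepsilon$-scaling, whereas the paper derives it from the coarser $\mathit{UB}/\mathit{LB}$ slack of $\approx 0.5$; this does not change the structure of the argument.
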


\subsection{Polynomial-time Algorithm for Uniform Weights and Unit Capacity}
\label{sec:mcmf}

Note that Coordination is \NP-hard even for capacity $\kappa=1$. Next we show that this setting is approachable once we restrict ourselves to uniform weights. 


\begin{theorem}
	Given collaboration and planning in the form of a complete schedule with missing agent assignment,
	Coordination of \ourProblem with capacity $\kappa=1$ and agents having uniform weights can be solved in polynomial time.
\label{thm:uniform-matching}
\end{theorem}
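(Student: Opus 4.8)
The plan is to reduce this restricted Coordination problem to a minimum-cost flow computation. First I would exploit the two simplifications the hypotheses provide. Since all weights equal a common value $w$, the total energy is $w$ times the total distance travelled, so it suffices to minimize total distance. Moreover, since $\kappa=1$, the fixed sequence $S^-$ decomposes uniquely into \emph{carrying segments}: following the actions of each message $m_i$ in $S^-$ yields consecutive pairs $(\text{\textunderscore}, h, m_i, +),(\text{\textunderscore}, h', m_i, -)$, each of which must be executed by a single agent that is empty before and after it. Let $T_1,\dots,T_N$ denote all these segments, where $T_\ell$ has a fixed pickup location $u_\ell$, a fixed drop-off location $v_\ell$, and a fixed position in the total order $S^-$. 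The distance travelled while carrying $T_\ell$ is exactly $d_G(u_\ell,v_\ell)$ independent of the assigned agent (this is precisely where uniform weights are essential), so $\sum_\ell d_G(u_\ell,v_\ell)$ is a constant; the only optimizable quantities are the \emph{empty} moves, namely each agent's initial move to the pickup of its first segment and the deadhead between the drop-off of one segment and the pickup of the next it serves.

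Second I would set up the compatibility structure. Say $T_\ell \prec T_{\ell'}$ if the drop-off action of $T_\ell$ precedes the pickup action of $T_{\ell'}$ in $S^-$. Because a segment's pickup always precedes its own drop-off, $\prec$ is a strict partial order, and an agent with $\kappa=1$ may perform $T_{\ell'}$ directly after $T_\ell$ exactly when $T_\ell \prec T_{\ell'}$. Conversely, any assignment in which each agent's sequence of segments forms a $\prec$-chain can be realized as a feasible schedule reproducing $S^-$: every agent's subsequence respects the global order, so the subsequences interleave back into $S^-$, and since there are no deadlines the required empty moves can always be scheduled in time. Thus Coordination becomes the task of covering $T_1,\dots,T_N$ by at most $k$ disjoint $\prec$-chains, one per used agent, minimizing the sum of the initial and deadhead distances.

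Third I would model this as a minimum-cost flow (equivalently, a transportation/assignment problem). Introduce a source $\sigma$ and sink $\tau$; for every agent $a_j$ an arc $\sigma\to A_j$ of capacity $1$ and cost $0$; for every segment $T_\ell$ an arc $A_j\to x_\ell$ of cost $d_G(p_j,u_\ell)$ (agent $a_j$ beginning its route with $T_\ell$) and an arc $x_\ell\to y_\ell$ of capacity $1$ and lower bound $1$ (forcing $T_\ell$ to be served); for every compatible pair $T_\ell\prec T_{\ell'}$ an arc $y_\ell\to x_{\ell'}$ of cost $d_G(v_\ell,u_{\ell'})$; and for every segment an arc $y_\ell\to\tau$ of cost $0$ (the route ends after $T_\ell$). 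A feasible flow saturating all segment arcs routes one unit through each $x_\ell\to y_\ell$, and since $\prec$ is acyclic its support decomposes into chains, each traced back through its in-arcs to a unique agent node, so at most $k$ agents are used and the flow cost equals the total of all initial and deadhead distances. I would solve this min-cost flow with lower bounds in polynomial time via the standard reduction to ordinary min-cost (max-)flow, and add back the constant $\sum_\ell d_G(u_\ell,v_\ell)$ to recover $\cost$; a minimum-cost feasible flow then yields an optimal agent assignment.

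The main obstacle I expect is establishing the exact correspondence in both directions: that every feasible agent assignment induces a feasible flow of equal cost, and conversely that the support of any feasible integral flow decomposes into agent-rooted $\prec$-chains with no spurious cycles (using acyclicity of $\prec$) while respecting the $k$-agent bound. The subtle point within this is verifying claim (ii) above, namely that respecting the total order $S^-$ is genuinely sufficient for time-feasibility of the reconstructed schedule; once this is settled, the flow construction and its polynomial running time are routine.
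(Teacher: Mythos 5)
Your proposal is correct and takes essentially the same approach as the paper: the paper likewise treats the carrying distances as a fixed constant and reduces the choice of empty moves to a minimum-cost bipartite matching (the assignment problem, solved as a min-cost max-flow), in which each pick-up action is matched either to an agent's starting position or to a preceding drop-off action --- exactly your agent-rooted $\prec$-chain decomposition in flow form. The time-feasibility concern you flag is moot, since a schedule is just an ordered list of actions with no real-time constraints, so any assignment whose per-agent subsequences respect the global order of $S^-$ interleaves back into a feasible schedule.
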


\begin{proof}
	As before, denote by $S^- = (\text{\textunderscore}, s_i, m_i, +),\dots, (\text{\textunderscore}, h, m_i,-),$ $\ldots,(\text{\textunderscore}, t_j, m_j, -)$ the prescribed schedule without agent assignments.
	Since all agents have the same uniform weight $w$, the cost $\cost(S)$ of any feasible schedule $S$ is determined by $\cost(S) = w\cdot \sum_{j=1}^k d_j$. 
	Hence at a pick-up action $(\text{\textunderscore}, q, m_i, +)$ it is not so much important \emph{which} agent picks up the message as \emph{where / how far} it comes from.
	
	Because we have capacity $\kappa = 1$, we know that the agent has to come from either its starting position or from a preceding drop-off action $(\text{\textunderscore}, p, m_j, -) \in S^-$.
	This allows us to model the problem as a weighted bipartite matching, see Figure~\ref{fig:MCMF-example} (center). We build an auxiliary graph $G' = (A \cup B, E_1' \cup E_2')$. 
	A maximum matching in this bipartite graph will tell us for every pick-up action in $B$, where the agent that performs the pick-up action comes from in $A$.
	Let $A := \{p_1, \dots, p_k\} \cup \{ (\text{\textunderscore},*,*,-)	\}$ and $B := \{ (\text{\textunderscore},*,*,+)	\}$.
	We add edges between all agent starting positions and all pick-ups, $E_1' :=  \{p_1, \dots, p_k\} \times \left\{ (\text{\textunderscore},q,m,+) \ \mid \ (\text{\textunderscore},q,m,+) \in B \right\}$
	of weight $d_G(p_i, q)$. Furthermore we also add edges between drop-offs and all subsequent pick-ups $E_2' := \left\{ \left( (\text{\textunderscore}, p, m_j, -), (\text{\textunderscore}, q, m_i, +)  \right) \ \mid \ 
	(\text{\textunderscore}, p, m_j, -) < (\text{\textunderscore}, q, m_i, +) \text{ in }S^- \right\}$
	of weight $d_G(p,q)$.

	\begin{figure}[t!]
		\centering
		\includegraphics[width=0.95\linewidth]{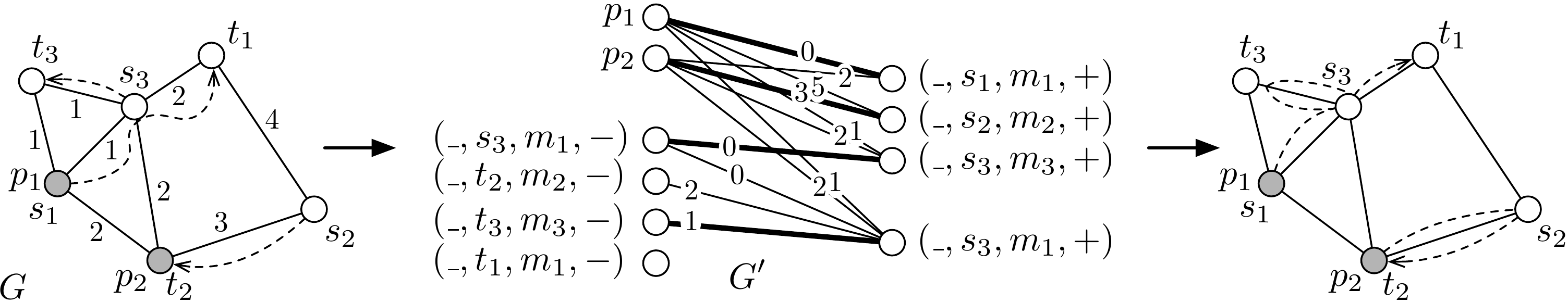}
		\caption{Illustration of the coordination of the following schedule $S = (\text{\textunderscore}, s_1, m_1, +)$, $(\text{\textunderscore}, s_2, m_2, +), (\text{\textunderscore}, s_3, m_1, -), (\text{\textunderscore}, s_3, m_3, +), (\text{\textunderscore}, t_2, m_2, -), (\text{\textunderscore}, t_3, m_3, -), (\text{\textunderscore}, s_3, m_1, +), (\text{\textunderscore}, t_1, m_1, -)$ (left) Instance with 3 messages and 2 agents of uniform weight. (center) Equivalent weighted bipartite matching problem $G'$. (right) The resulting trajectories of the agents.}
		\label{fig:MCMF-example}
	\end{figure}

	A maximum matching of minimum cost in $G'$ captures the optimal assignment of agents to messages and can be found by solving the classic \emph{assignment problem}, 
	a special case of the \emph{minimum cost maximum flow problem}. 
	Both of these problems can be solved in polynomial time for instance using the \emph{Hungarian method}~\cite{kuhn1955hungarian} or the \emph{successive shortest path algorithm}~\cite{edmonds1972theoretical}, respectively.
	The cost of this optimum matching corresponds to the cost of the agents moving around without messages. 
	The cost of the agents while carrying the messages can easily be added: Consider the schedule $S^-$ restricted to a message $m_i$. 
	This subsequence $S^-|_{m_i}$ is a sequence of pairs of pick-up/drop-off actions $\left( (\text{\textunderscore}, q, m_i, +), (\text{\textunderscore}, p, m_i, -)  \right)$, 
	and in every pair the message is brought from $q$ to $p$ on the shortest path, so we add $\sum d_G(q,p)$. 
	Concatenating these piecewise shortest paths gives the trajectory of each agent in the optimum solution, as illustrated in Figure~\ref{fig:MCMF-example}~(right).
\end{proof}

\drop{
We let the messages be numbered in the order that they have to be delivered. To be precise, we assume that message $i$ has to get delivered at $t_i$ before the next message $i+1$ can be picked up at $s_{i+1}$. Therefore, at any point in time at most one message is in transit and so no agent has to carry more than one message at a time.

We can now argue that a message never has to be handed over from one agent to another.
Any optimum schedule with handovers can be transformed into an optimum schedule without handovers as follows.
Whenever two agents meet and exchange messages, we know that only a single message can be involved in this exchange as by assumption at most one message is in transit at any time. 
Thus one agent carries one message, the other agent carries no message both before and after the exchange. 
This implies that we can simply swap the trajectories of the two exchanging agents from the meeting point onwards to get rid of this exchange.
We can eliminate all exchanges one by one without changing the total travel distance.

This allows us to model the problem as a weighted bipartite matching. We build an auxiliary graph $G' = (A \cup B, E')$. A maximum matching in this bipartite graph will tell us for every message where the agent that serves this message comes from. Before starting to work on message $i$ at $s_i$, the agent might have been at any $p_j$, if it is the first message for agent $j$, or at some $t_{i'}$ with $i'<i$, if $i'$ is the message that the agent delivered right before $i$. Note that we do not care which agent delivers the message as they all have the same weight and differ only in their starting position. All we need to enforce is that the agents comply with the fixed message order.
We let $A := \{p_1, \dots, p_k\} \cup \{t_1, \dots, t_{m-1}\}$ and $B := \{s_1, \dots, s_m\}$.We set $E' := \{(p_j,s_i) \mid j \in [k], i \in [m]\} \cup \{(t_{i'},s_i) \mid i' < i; i',i \in [m]\}$ to enforce the message order. We set the weight of the edge $(a,b) \in E'$ to $d_G(a,b)$, the length of the shortest path from $a$ to $b$ in $G$. We refer to Figure~\ref{fig:MCMF-example} for an example.

A maximum matching of minimum cost in $G'$ captures the optimal assignment of agents to messages and can be found by solving the classic \emph{assignment problem}, a special case of the \emph{minimum cost maximum flow problem}. Both of these problems can be solved in polynomial time for instance using the \emph{Hungarian method}~\cite{kuhn1955hungarian} or the \emph{successive shortest path algorithm}~\cite{edmonds1972theoretical}, respectively.
The cost of this optimum matching corresponds to the cost of the agents moving around without messages. The cost of the agents while carrying the messages can easily be added: every message is brought to its target on the shortest path, so we add $\sum_{i=1}^m d_G(s_i,t_i)$. Concatenating these piecewise shortest paths gives the trajectory of each agent in the optimum solution, as illustrated in Figure~\ref{fig:MCMF-example}.
}

Our algorithm is remotely inspired by a simpler problem at the ACM ICPC world finals 2015~\cite{ICPC}.
The official solution is pseudo-polynomial~\cite{ICPCofficial}, Austrin and Wojtaszczyk~\cite{ICPCunofficial} later sketched a min-cost bipartite matching solution.

\section{Approximation Algorithm}
\label{sec:approx}

\ifProofsAppendix 
We have seen in Section~\ref{sec:singlemessage} that the cost of an optimal schedule $\OPT$ for \ourProblem can be approximated to within a factor of two by restricting ourselves to schedules $S^R$ in which every message is only transported by a single agent, with exactly one pick-up and one drop-off (Theorem~\ref{theo-boc-nointermediate}).
Let $\OPT^R$ be an optimal restricted schedule. Given an auxiliary graph $G'$ on a vertex set consisting of all agent positions and all message source and destination nodes, 
we construct in polynomial time a minimum tree cover of $G'$ from which be build a schedule $S^*$ in which agents travel at most twice the distance of their counterparts in $\OPT^R$ (similarly to Theorem~\ref{thm:planning-restricted}).
Here we require capacity $\kappa=1$. 
In order for our method to work, we need to be indifferent between different weights; 
we achieve this by boosting each agent's weight $w_j$ to $\max w_i$, resulting in an additional loss in the approximation factor of $\max \tfrac{w_i}{w_j}$: 
\else
By putting the results of the previous sections together, we obtain the following approximation algorithm.
\fi

\begin{restatable}{thm2}{approximationalgorithm}
	There is a polynomial-time $( 4\max\tfrac{w_i}{w_j} )$-approximation algorithm for \newline \ourProblem with capacity $\kappa=1$.
\end{restatable}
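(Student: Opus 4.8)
The plan is to combine three ingredients, each of which costs a bounded factor: the reduction to single-agent-per-message schedules from Theorem~\ref{theo-boc-nointermediate}, a weight-boosting step that makes the agents interchangeable, and a minimum tree-cover computation whose doubling yields a feasible schedule. Throughout, write $\mathrm{dist}(S):=\sum_j d_j$ for the total distance travelled in a schedule $S$.

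First I would invoke Theorem~\ref{theo-boc-nointermediate}: since $\kappa=1$, there is a restricted schedule $S_0$ in which every message is carried by a single agent with exactly one pick-up and one drop-off, every agent returns to its start, and $\cost(S_0)\le 2\cdot\cost(\OPT)$. Thus it suffices to approximate a best restricted schedule. The obstacle here is that even finding the optimal restricted schedule amounts to solving \emph{Coordination}, which is \NP-hard (Theorem~\ref{thm:message-order-hardness}) for non-uniform weights. To sidestep this, I would boost every agent's weight to $W:=\max_i w_i$, so that all agents become interchangeable and a schedule's cost depends only on $\mathrm{dist}(S)$ (times $W$). For any schedule the boosted cost $W\cdot\mathrm{dist}(S)$ exceeds the true cost $\sum_j w_j d_j$ by at most a factor $\max \tfrac{w_i}{w_j}$, since $W d_j=\tfrac{W}{w_j}w_j d_j\le(\max\tfrac{w_i}{w_j})\,w_j d_j$; in particular $W\cdot\mathrm{dist}(S_0)\le(\max\tfrac{w_i}{w_j})\,\cost(S_0)$.

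Next I would set up a minimum tree-cover problem on the auxiliary complete graph $G'$ whose vertices are the agent positions $p_1,\dots,p_k$ together with all sources and targets, with edge weights $d_G(\cdot,\cdot)$ and the $m$ message edges $(s_i,t_i)$ declared mandatory. I seek a minimum-weight spanning forest of $G'$ that contains all mandatory edges and in which every tree component contains exactly one agent position. This is computable in polynomial time by contracting all agent positions into a single super-root $\rho$ joined to each $p_j$ by a zero-weight edge, forcing the mandatory edges and all $\rho p_j$ edges into the solution, and completing to a minimum spanning tree via Kruskal; deleting $\rho$ afterwards splits the tree into exactly one rooted component per agent. Call its weight $C_{\mathrm{tree}}$. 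Because in $S_0$ each agent's closed walk spans its own landmarks $\{p_j\}\cup\{s_i,t_i\}$ and traverses each of its message edges directly, the walk length is at least the weight of a tree on those landmarks containing the corresponding mandatory edges; summing over agents (the message sets, hence the landmark sets, are disjoint) shows $C_{\mathrm{tree}}\le\mathrm{dist}(S_0)$.

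Finally I would convert the forest into a feasible schedule $S^*$ exactly as in the $\kappa=1$ case of Theorem~\ref{thm:planning-restricted}: each agent runs a DFS of its tree, crossing every edge twice and, whenever it crosses a message edge in the $s_i\to t_i$ direction (such a crossing always occurs, as DFS traverses each edge both ways), picking up and immediately dropping message~$i$, which is legal since $\kappa=1$ and the mandatory edge is a direct shortest path with no intermediate stops. This gives $\mathrm{dist}(S^*)\le 2\,C_{\mathrm{tree}}$. Chaining the bounds yields $\cost(S^*)\le W\cdot\mathrm{dist}(S^*)\le 2W\,C_{\mathrm{tree}}\le 2W\cdot\mathrm{dist}(S_0)\le 2(\max\tfrac{w_i}{w_j})\,\cost(S_0)\le 4(\max\tfrac{w_i}{w_j})\,\cost(\OPT)$, and every step runs in polynomial time. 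The step I expect to be the crux is the tree-cover lower bound $C_{\mathrm{tree}}\le\mathrm{dist}(S_0)$: it is precisely what lets us replace the \NP-hard Coordination by a polynomial spanning-tree computation once the weights have been made uniform.
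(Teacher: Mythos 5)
Your proposal is correct and takes essentially the same approach as the paper's proof: a factor~$2$ from the reduction to restricted schedules (Theorem~\ref{theo-boc-nointermediate}), a factor~$\max\tfrac{w_i}{w_j}$ from boosting all weights to $\max_i w_i$, and a factor~$2$ from doubling a minimum constrained tree cover of the auxiliary graph $G'$ with forced edges $(s_i,t_i)$ and exactly one agent position per component. The only differences are cosmetic: the paper boosts the weights \emph{before} invoking Theorem~\ref{theo-boc-nointermediate} and certifies the tree-cover lower bound via the path cover induced by an optimal restricted schedule (rather than by shortcutting the closed walks of your $S_0$), and its modified Kruskal rule (reject edges creating a cycle or joining two agent positions) is an equivalent implementation of your super-root contraction.
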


\ifProofsAppendix \else
\ExecuteMetaData[appendix-approximation.tex]{approximationalgorithm}
\fi

\newpage
\bibliography{bib}

\ifProofsAppendix 
\newpage
\section*{Appendix}
\appendix

\section{Appendix: Collaboration}
\label{app:BoC}

\subsection*{An algorithm for WeightedDelivery of a single message}

\nonincreasing*

\begin{proof}
	If agent $a_i$ hands the message over to agent $a_j$ with $w_{i}<w_{j}$ in any solution, we can construct a better solution by replacing
	$a_j$'s trajectory carrying the message with the same trajectory using agent $a_i$. By the same argument we may also assume without loss of
	generality that the weights of the agents carrying the message in an optimum schedule are strictly decreasing, since we can merge trajectories
	of equal weight.
\end{proof}

\begin{example}
This example shows that Lemma~\ref{lemma:non-increasing} is not true for more than one message.
In the graph shown in Figure~\ref{fig:example_increasing_weights}, we let one agent $a_1$ with weight $w_1=1$ start in vertex $s_2$ and a second agent $a_2$ with weight $w_2=1.5$ start in vertex $v_1$. In the optimal schedule, the message 2 with starting location $s_2$ is first transported by $a_1$ to $v_1$ and from there by $a_2$ to its destination $t_2$. Thus, the weights of the agents transporting the message $2$ are increasing in this case.
\begin{figure}[h!]
\tikzstyle{graphnode}=[circle,draw,minimum size=2.5em,scale=0.7]
\tikzstyle{dnode}=[scale=0.7]
\newcommand*{\xFactor}{1.7}
\newcommand*{\yFactor}{1.2}
\begin{tikzpicture}
\node (s2) at (0*\xFactor,0*\yFactor) [graphnode] {$s_{2}$};
\node (v1) at (1*\xFactor,0*\yFactor) [graphnode] {$v_{1}$};
\node (t2) at (2*\xFactor,0*\yFactor) [graphnode] {$t_{2}$};
\node (s1) at (1*\xFactor,1*\yFactor) [graphnode] {$s_{1}$};
\node (t1) at (2*\xFactor,1*\yFactor) [graphnode] {$t_{1}$};

\draw(s2) to node [dnode,above] () {1}(v1);
\draw(v1) to node [dnode,above] () {1}(t2);
\draw(v1) to node [dnode,left] () {1}(s1);
\draw(s1) to node [dnode,above] () {1}(t1);
\end{tikzpicture}
\caption{Example where the weights of the agents in the order they are transporting a message is increasing.}\label{fig:example_increasing_weights}
\end{figure}
\end{example}

\subsection*{Upper bound on the benefit of collaboration}

\BoCUB*

\begin{proof}
We can assume without loss of generality that in the optimal schedule $\OPT$ every message $i$ is transported on a simple path from its starting point $s_i$ to its destination $t_i$. This can be easily achieved by letting agents drop the messages at intermediate vertices if they would otherwise transport it in a cycle. We define the directed multigraph $G_S=(V, E \dotcup \overline{E})$ as follows:
\begin{itemize}
\item $V$ is the set of vertices of the original graph $G$.
\item For every time in the optimal schedule that an agent traverses an edge $\{u,v\}$ from $u$ to $v$ while carrying a set of messages $M'$, we add the arc $e=(u,v)$ to $E$ and $\bar{e} =(v,u)$ to $\overline{E}$. We further label both edges with the set of messages $M'$ and write~$M_e = M_{\bar{e}} = M'$ to denote these labels. We call the edges in $E$ \emph{original} edges and the edges in $\overline{E}$ \emph{reverse} edges. 
\end{itemize}
We say that the tour of an agent $A$ \emph{satisfies the edge labels}, if every original edge $e \in E$ is traversed at most once by~$A$ and only while carrying the \emph{exact} set of messages $M_e$, and every reverse edge $\bar{e} \in \overline{E}$ is traversed by $A$ at most once and without carrying any message.

We will show that there exists a Eulerian tour satisfying the edge labels of every connected component of $G_S$. We then let the cheapest agent in each connected component follow the respective Eulerian tour. This agent traverses every edge exactly twice as often as the edge is traversed in the optimal schedule~$\OPT$ by all agents. As we choose the cheapest agent in each connected component, we obtain a schedule $S$ with $\cost(S)\leq 2 \cdot \cost(\OPT)$. 

By only considering a subset of the messages and a subschedule of~$\OPT$, we may from now on assume that~$G_S$ is strongly connected (by construction, every connected component of $G_S$ is strongly connected). Further, let $a_{\text{min}}$ be an agent with minimum cost among the agents that move in~$\OPT$, let $M(v)$ be the set of messages currently placed on vertex $v$, and let $M(a_{\text{min}})$ be the set of messages currently transported by agent~$a_{\text{min}}$. We first show that the procedure \textproc{computeTour} computes a closed tour for~$a_{\text{min}}$ that satisfies the edge labels, and afterwards we explain how we can iterate the procedure to obtain a Eulerian tour satisfying the edge labels.

\begin{algorithm}[t]
\caption{computeTour}
\begin{algorithmic}
\Function{computeTour}{}
	\State drop all messages 	
	\If {$\exists$ edge $e \in E$ incident to current vertex}
		\If{$M($current vertex$) \supseteq M_e$}
			\State pickup messages $M_e$, traverse $e$ and delete it
		\Else 
			\State let $j \in M_e \setminus M($current vertex$)$ 
		 	\State \textproc{fetchMessage}($j$, currentVertex)
		\EndIf
	\ElsIf{$\exists$ edge $\overline{e}\in \overline{E}$}
		\State traverse $\overline{e}$ and delete it 
	\EndIf	 
\EndFunction
\end{algorithmic}
\end{algorithm}

\begin{algorithm}[t]
\caption{fetchMessage}
\begin{algorithmic}
\Function{fetchMessage}{$i, v$}
	\State drop all messages 	
	\While{$i \notin M($current vertex$)$}
	  \If{there is a edge $\bar{e} \in \overline{E}$ with $i \in M_e$ leaving the current vertex }
		  \State traverse $\bar{e}$ and delete it 
		\Else
		  \State {\bf give up}
		\EndIf
	\EndWhile
	\While{$v \neq$ current vertex}
		\State let $e\in E$ be edge incident to current vertex with $i \in M_e$
		 \If {$M($current vertex$) \cup M(a_{\text{min}}) \supseteq M_e$}
		 	\State pickup messages $M_e$, drop all other messages, traverse $e$ and delete it
		 \Else
		 	\State let $j \in M_e \setminus (M($current vertex$) \cup M(a_{\text{min}}))$
		 	\State \textproc{fetchMessage}($j$, currentVertex)
		 \EndIf
	\EndWhile	 
\EndFunction
\end{algorithmic}
\end{algorithm}

\medskip
\underline{Claim 1:} If the agent $a_{\text{min}}$ starts in a vertex $v_0$ and follows the tour computed by \textproc{computeTour}, it satisfies the edge labels in every step and returns to its starting location.
\medskip

Both procedures \textproc{computeTour} and \textproc{fetchMessage} make sure that the agent traverses every edge $e \in E$ with label $M_e$ while carrying the exact set of messages $M_e$ and every edge $\bar{e} \in \overline{E}$ while carrying no messages. So the first part of the claim is clear. We only need to show that~$a_{\text{min}}$ cannot get stuck at some vertex $v^*$ before returning to $v_0$. As $G_S$ is Eulerian (ignoring the labels) and edges are deleted once they are traversed, this can only happen if some call \textproc{fetchMessage}$(i, v)$ gives up at vertex~$v^*$.
This means that the current vertex $v^*$ does not contain message $i$ and has no edge $\bar{e} \in \overline{E}$ with a label containing message~$i$. 

Note that when $a_{\text{min}}$ is currently proceeding according the call \textproc{fetchMessage}$(i, v)$, then it will be on a vertex of the path that message~$i$ takes from its start $s_{i}$ to its destination $t_{i}$ in the optimal schedule $\OPT$, and this path is simple by our initial assumption. 
Also note that, since edge labels are obeyed, message~$i$ only ever moves forward along its path in~$\OPT$.
This means that if~$a_{\text{min}}$ is stuck at vertex~$v^*$, there must initially have been an edge $\bar{e}=(v^*,w) \in \overline{E}$ incident to~$v^*$ with $i \in M_e$ that was taken by the agent earlier and then deleted. The agent traverses edges in $\overline{E}$ only in the procedure \textproc{fetchMessage}. So there must have been a call \textproc{fetchMessage}$(j_1, v_1)$ before, where the agent traversed the edge $\bar{e}$. This call cannot have been completed as otherwise the original edge $e=(w,v^*) \in E$ corresponding to $\bar{e}$ would have been used by~$a_{\text{min}}$ and the message $i$ would have already reached $v^*$, since~$i\in M_e$. This contradicts that the message path of $i$ is simple and the agent is currently proceeding according the call \textproc{fetchMessage}$(i, v)$ at vertex $v^*$.

As the call \textproc{fetchMessage}$(j_1, v_1)$ is not complete, there must be a vertex $v_2$ and a message $j_2$ missing at this vertex to further carry $j_1$ on its paths to the destination, and a call \textproc{fetchMessage}$(j_2, v_2)$ which is also
 incomplete. By iterating this argument, we obtain that the current stack of functions is \textproc{fetchMessage}$(j_s, v_s), \ldots,$ \textproc{fetchMessage}$(j_1, v_1)$ for some $s \in \mathbb{N}$, where $j_s=i$ and $v_s=v$. In the optimal schedule $\OPT$ the message $j_2$ needs to be transported to $v_2$ before $j_1$ can be further transported from
  $v_2$ together with $j_2$. Similarly, message $j_r$ needs to be transported to $v_r$ before message $j_{r-1}$ can be transported further together with message $j_r$ from $v_r$ for $r=2,\ldots, s$. Moreover, on the edge $e=(w,v^*)$ the messages $j_1$ and
   $j_s=i$ need to be transported together and in particular, message $j_1$ needs to be 
   transported together with $j_s$ before $j_s$ can be transported further. But in the optimal schedule the messages must be transported in a certain sequence and it cannot be that message $i$ needs to be transported to $v$ before messages $j_1$ is transported to message $v_1$ and vice versa. Thus. \textproc{computeTour} must terminate with $a_{\text{min}}$ returning to the starting location~$v_0$.

\bigskip
\underline{Claim 2:} After completing a tour given by \textproc{computeTour} the following holds: Every message $i$ has either been transported to its destination or it is on a vertex $v_i$ such that there is a path from $v_i$ to $t_i$ with edges in $E$ containing~$i$ in their labels, and a path in the reverse direction with edges in $\overline{E}$ containing~$i$ in their labels.
\bigskip

Every edge $e \in E$ with label $M_e$ is only traversed if the agent~$a_{\text{min}}$ carries the set of messages $M_e$. Thus at any time there is a path from the current location~$v_i$ of message~$i$ to its destination~$t_i$ with edges containing~$i$ in the label. This shows the first part of the claim.

Observe that a completed call \textproc{fetchMessage}$(i,v)$ yields a closed walk, as the agent starts and ends in $v$. Moreover, it first traverses exactly all edges in $\overline{E}$ on the path from~$v$ to the current position~$v_i$ of message~$i$ and then all edges in~$E$ on the path from~$v_i$ to~$v$. Inductively, this also holds for all levels of recursive calls of \textproc{fetchMessage}. Hence, for every original edge $e \in E$ also the corresponding reverse edge $\bar{e} \in \overline{E}$ is traversed in a call of \textproc{fetchMessage}.

This fact also implies that for any edge~$E \dotcup \overline{E}$ traversed in the procedure \textproc{computeTour} (and not in \textproc{fetchMessage}), the corresponding original/reverse edge cannot be traversed in a call of \textproc{fetchMessage}. Inductively, we can therefore argue that if~$e_1,e_2,\ldots, e_s  \in E \dotcup \overline{E}$ are the edges traversed in the procedure \textproc{computeTour} in this order, such that the corresponding original/reverse edges were not traversed, then~$e_s,\ldots, e_1$ is a path from the current location of~$a_{\text{min}}$ to its starting vertex. This shows that at termination for every original edge $e \in E$ also the corresponding reverse edge $\bar{e} \in \overline{E}$ is traversed.

\medskip
\underline{Claim 3:} Combining the tours returned by multiple calls of \textproc{computeTour} yields a Eulerian tour that satisfies the edge labels in every step.
\medskip

Assume that the tour~$T$ resulting from a call of \textproc{computeTour} does not traverse all edges of~$G_S$. Let~$v_0$ be the starting vertex of the tour, $v$ be the last vertex on the tour that is is incident to an edge which is not visited, and~$v_i$ be the position of message~$i$ after the tour~$T$ is finished. Further, let~$G'_S$ be the graph~$G_S$ after the call of \textproc{computeTour}, i.e., without all edges in~$T$ and message~$i$ at position~$v_i$ instead of~$s_i$. We want to show that we can run \textproc{computeTour} on~$G'_S$ with~$a_{\text{min}}$ starting at~$v$ and then add the resulting tour~$T'$ to~$T$ as follows: First~$a_{\text{min}}$ follows~$T$ until the last time it visits~$v$, then it follows~$T'$, and finally the remaining part of~$T$.

The graph~$G_S'$ is a feasible input to \textproc{computeTour} by Claim~2. It corresponds to the instance of \ourProblem, where all message transported in the schedule $\OPT$, which are done in the tour~$T$ by~$a_{\text{min}}$, are completed, and the agent positions are adapted accordingly. By Claim~1, \textproc{computeTour} will produce a tour that satisfies the edge labels. The only problem that can occur when combining the tours~$T$ and~$T'$ therefore is that
during following the tour~$T'$, \textproc{fetchMessage}$(i,v)$ is called, but some message~$i$ is not yet transported to~$v_i$ because the tour~$T$ has not been completed. This means that vertex~$v_i$ is visited after the last time~$v$ is visited by the tour~$T$. By the choice of~$v$, all edges incident to~$v_i$ must be visited by the tour $T$, in particular, we must have that~$v_i=t_i$ and message~$i$ is delivered to its destination by the tour $T$. But then~$G'_S$ does not contain any edge with label~$i$ by Claim~2.

By iterative applying the above argument, we obtain a Eulerian tour that satisfies the edge labels in every step.
\end{proof}

\subsection*{Upper bound on the benefit of collaboration for a single message}

\BoConemessageUB*

\begin{proof}
	By using Dijkstra's algorithm, we can determine the agent that can transport the message from $s$ to $t$ with lowest cost. We need to show that this is at most $1 / \ln(2)$ the cost of an optimum using all agents.	

	Fix an optimum solution and let the agents $a_1,a_2,\dots,a_r$ be labeled in the order in which they transport the message in this optimum solution (ignoring unused agents). 
	We can assume by Lemma~\ref{lemma:non-increasing} that $w_1 >w_2 > \ldots >w_r$. 
	By scaling, we can further assume without loss of generality that $w_r=1$ and that the total distance traveled by the message is 1.
	Now, for each point $x \in [0,1]$ along the message path there is an agent $a_j$ with cost $w_j$ carrying the message at this point in the optimum schedule and we can define a function $f$ with $f(x)=w_j$. 
	The function $f$ is a step function that is monotonically decreasing by Lemma~\ref{lemma:non-increasing} with $f(0)=w_1$ and $f(1)=w_r=1$.  
	We now choose the largest $b \in [0,1]$ such that $f(x) \geq \tfrac{b}{x+1}$, see Figure~\ref{fig:benefit-of-collaboration}.

	\begin{figure}[t!]
		\vspace{-4ex}
		\centering
		\usetikzlibrary{arrows,intersections}
		\tikzstyle{gLine}=[thick]
		\tikzstyle{kreis}=[circle,inner sep=0pt, minimum size=3pt,draw,thick,color=black,fill=white]
		\tikzstyle{niceArrow}=[->,>=stealth',
		    dot/.style = {draw,
		      fill = white,
		      circle,
		      inner sep = 0pt,
		      minimum size = 4pt
		    }]
		\begin{tikzpicture}[scale=0.75]
			\coordinate (O) at (0,0);
			\draw[gLine,niceArrow] (-0.3,0) -- (8,0) coordinate[label = {below:$x$}] (xmax);
			\draw[gLine,niceArrow] (0,-0.3) -- (0,3) coordinate[label = {}] (ymax);
			\draw[gLine] (0,2.5)--(1.5,2.5);
			\node (k1) at (1.5,2.5) [kreis]{};
				\draw[gLine] (1.5,2.1)--(2.5,2.1);
			\node (k2) at (2.5,2.1) [kreis]{};
				\draw[gLine] (2.5,1.65)--(3,1.65);
			\node (k3) at (3,1.65) [kreis]{};  
				\draw[gLine] (3,1.4)--(5.5,1.4);
			\node (k3) at (5.5,1.4) [kreis]{};  
				\draw[gLine] (5.5,1.1)--(7,1.1);
		
			\draw[gLine]  (7,0.1)--(7,-0.1);
			\node (l1) at (7,-0.4) [scale=1]{$1$};
		
			\draw[dashed] (3,1.4)--(3,-0.1);
			\draw[gLine]  (3,0.1)--(3,-0.1);
			\node (l2) at (3,-0.4) [scale=1]{$x^*$};
		
			\draw[dashed] (3,1.4)--(0,1.4);
			\draw[gLine]  (-0.1,1.4)--(0.1,1.4);
			\node (l3) at (-0.5,1.4) [scale=1]{$w_{j^*}$};
		
			\node (l3) at (5,0.7) [color=blue, scale=1]{$\frac{b}{x+1}$};
			\node (l3) at (3,2.5) [, scale=1]{$f(x)$};
			\node (i4) at (3,1.4) [circle,inner sep=0pt, minimum size=4pt, fill,color=blue]{};
		
			\draw[thick,scale=4,domain=-0:2,smooth,variable=\x,blue] plot ({\x},{0.61/(\x+1)});
		
			\path[name path=x] (0.3,0.5) -- (6.7,4.7);
		
			\path[name path=y] plot[smooth] coordinates {(-0.3,2) (2,1.5) (4,2.8) (6,5)};
		\end{tikzpicture}
		\caption{Choosing the largest $b$ such that $\tfrac{b}{x+1}$ is a lower bound on the step-function $f$ representing the weight of the agent currently transporting the message.}
		\label{fig:benefit-of-collaboration}
	\end{figure} 

 	Note that $b \geq 1$ as $f(x) \geq 1 \geq \frac{b}{x+1}$ for $b=1$ and all $x \in [0,1]$. 
 	Further, let $g_{j}$ be the distance traveled by agent $a_j$ without the message and $g:=\sum_{j=1}^r g_j w_i$ the total cost for the distances traveled by all agents without the message. 
 	We obtain the following lower bound for an optimum solution
 	\begin{linenomath}
 	\[	\cost(\OPT) = \int_0^1 f(x)\,\mathrm{d}x + g \geq \int_0^1 \frac{b}{x+1}\,\mathrm{d}x +g = b \ln(2) + g.	\]
 	\end{linenomath}
 	By the choice of $b$, the functions $f(x)$ and $\frac{b}{x+1}$ coincide in at least one point in the interval $[0,1]$.
	Let this point be $x^*$ and $a_{j^*}$ be the agent carrying the message at this point. This means that $f(x^*)= \frac{b}{x^*+1} = w_{j^*}$. 
	We will show that it costs at most $(1/ \ln(2))\cost(\OPT)$ for agent $a_{j^*}$ to transport the message alone from $s$ to $t$. 
	The cost for agent $a_{j^*}$ to reach $s$ is bounded by $g_{j^*} w_{j^*} + x^* \cdot w_{j^*}$ and the cost for transporting the message from $s$ to $t$ is bounded by $w_{j^*}$. 
	Thus, the cost of the algorithm using only one agent can be bounded by
	\begin{linenomath}
	\[	\cost(\textsc{ALG})\leq g_{j^*} w_{j^*} + x^* \cdot w_{j^*} + w_{j^*} = g_{j^*}w_{j^*} + (x^* +1) \cdot \frac{b}{x^*+1} = b + g_{j^*} w_{j^*}.	\]
	\end{linenomath}
	By using that $g_{j^*} w_{j^*} \leq g$, we finally obtain 
	$
	\tfrac{\cost(\ALG)}{\cost(\OPT)}\leq \tfrac{b + g_{j^*} w_{j^*}}{b \ln(2) +g} \leq
	\tfrac{b}{ b \ln(2)}  = \tfrac{1}{\ln(2)}.
	$
\end{proof}

\subsection*{Upper bound on the benefit of collaboration without intermediate dropoffs}

\nointermediate*

\begin{proof}
	For $\underline{\kappa = \infty}$ this is a corollary of Theorem~\ref{theo-ub-boc}: an agent $a$ with infinite capacity can just as well keep a message $m_i$ once it was picked up, 
	i.e. we can simply remove all the actions for this message between the first pick-up $(a,s_i,m_i,+)$ and the last drop-off $(a,t_i,m_i,-)$.

	For $\underline{\kappa = 1}$ we need a different analysis. Given an optimum schedule $\OPT$, we look at how the trajectories of the messages are connected by the agents.
	More precisely, we construct an auxiliary multigraph $G' = (V', E')$ as follows:
	The vertex set $V'$ consists of all messages. Then, for each agent $a$ we look at its subsequence of the optimum schedule, $\OPT|_{a}$. 
	Since $a$ has capacity 1, its subsequence consists of alternating drop-offs $(a,p,m_i,-)$ and pick-ups $(a,q,m_j,+)$.
	For each drop-off followed by a pick-up action we add an edge $(m_i,m_j)$ of length $d_G(p,q)$ to $E'$ (Figure~\ref{fig:boc-nointermediate}).
	These edges correspond to the portions of the optimum schedule where the agent travels without carrying a message.
	\begin{figure}[ht!]
		\centering
		\includegraphics[width=\linewidth]{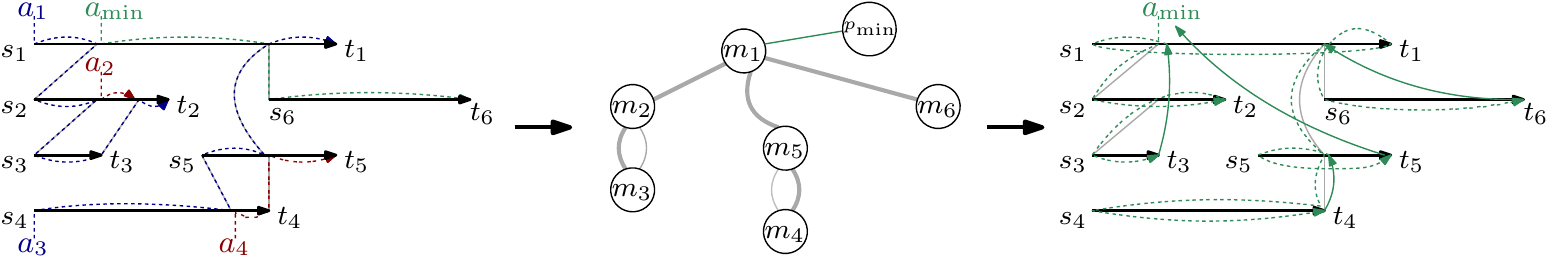}
		\caption{(left) An optimal schedule, (center) the auxiliary graph, (right) the 2-approximation.}
		\label{fig:boc-nointermediate}
	\end{figure}
	We assume without loss of generality that $G'$ is connected and denote by $a_{\min}$ the agent of minimum weight involved in $\OPT$ 
	(otherwise we can look at each connected component and its agent of minimum weight separately). 
	Assume that the first action of~$a_{\min}$ in~$\OPT$ is to move from its starting position~$p_{\min}$ to a node~$p$ where it picks up message~$m_1$ 
	(note that $p$ can potentially lie anywhere on the trajectory between $s_1$ and $t_1$). 
	We model this by adding a node~$p_{\min}$ to~$V'$ and connecting it to~$m_1$ by an edge of corresponding length~$d_G(p_{\min}, p)$.
	Now we can take a minimum spanning tree of $G'$ and remove all redundant edges. 
	
	Note that the total length of the minimum spanning tree is a lower bound on the sum of the distances traveled by all the agents in $\OPT$ \emph{without} carrying a message.
	Thus any schedule~$S$ of~$a_{\min}$ which move exactly twice along the trajectory of each message \emph{and} twice along the path corresponding to each edge of the minimum spanning tree of $G'$ has a cost of at most $\cost(S) \leq 2\cdot \cost(\OPT)$. The following tour satisfies this property and delivers each message to its destination immediately after it is picked up:

	We first let~$a_{\min}$ walk to~$p$, from where~$a_{\min}$ proceeds towards~$s_1$. 
	When~$a_{\min}$ reaches~$s_1$, it picks up~$m_1$ and delivers it to~$t_1$ along its trajectory in~\OPT.
	Once $a_{\min}$ reaches $t_1$, we let it return to~$p$ and from there back to its original position~$p_{\min}$.
	If, however, along its way from $p$ to $s_1$ or from $t_1$ to $p$ the agent visits an endpoint of a path corresponding to an edge of the minimum spanning tree,
	we first let $a_{\min}$ serve the adjacent subtree recursively, see Figure~\ref{fig:boc-nointermediate}. 
	It is easy to see that in the resulting schedule~$S$ every message is directly transported from its source to its destination, and that the capacity is respected  at all times. 
\end{proof}
\section{Appendix: Planning}
\label{app:planning}

\singleagenthardness*

\begin{proof}
We proceed by a reduction from Hamiltonian cycles on a grid graph, a problem shown to be \NP-hard by Itai et al.~\cite{NPonGrid}.
A similar reduction was used for a sorting problem by Graf~\cite{ESA15}.
Given an unweighted grid graph $H = (V_H, E_H)$ with $V_H = \left\{ v_1, v_2, \ldots, v_n \right\}$. We add to every vertex $v$ a new vertex $v'$ with an edge $e= (v,v')$ and a
message with start $v$ and target $v'$. Denote the new graph with $G = (V, E)$, where $V = V_H \cup V_H'$, $V_H' = \{v' \mid v \in V_H\}$ and $E =
E_H \cup E_H'$, where $E_H' = \left\{ (v,v')\ | \ v \in V_H \right\}$. 

\begin{figure}[b!]
	\centering
	\includegraphics[width=\linewidth]{hamilton-cycle-reduction}
	\caption{Finding a Hamiltonian cycle via \ourProblem with a single agent.}
	\label{fig:hamilton-cycle-reduction2}
\end{figure}

We build an instance of \ourProblem by taking $G$ and placing a single agent on an arbitrary vertex $p_1 \in V_H$. We let $w_1 = 1$ and set unit edge length $l_e = 1$ for all edges in $E_H$ and edge length $l_e=0$ for all edges in $E_{H'}$ except $(p_1,p_1')$, see Figure~\ref{fig:hamilton-cycle-reduction2}. The edge $(p_1,p_1')$ gets length $x = l_{(p_1,p_1')}=|V|$ instead.

Now let the $d_1$ be the length of the shortest path in $G$ starting in $p_1$ on
which the agent can deliver all messages.
We now argue that there is a Hamiltonian cycle in $H$ if and only if $d_1 = 2|V|$.
To see that $2|V|$ is a lower bound for $d_1$, let us distinguish whether the agent ends at $p_1'$ or not.
If we end at $p_1'$, we have to reach every $v \in V_H\setminus \{p_1\}$ at least once and also go back to $p_1$ before using $(p_1,p_1')$. This sums up to at least $(|V|-1)+1+|V|= 2|V|$. If we end somewhere else, we have to use $(p_1,p_1')$ twice, hence $d_1 \geq 2|V|$. So we get a schedule of cost $2|V|$ if and only if we reach every vertex exactly once and end at $p_1'$. When removing all the $E_H'$-steps, such a schedule directly corresponds to a Hamiltonian cycle as illustrated in Figure~\ref{fig:hamilton-cycle-reduction2}.
\end{proof}

\approximationhardness*

\begin{proof}
	We build on top of a result by Karpinski, Lampis and Schmied~\cite[Theorem 4]{karpinski2015new} which shows that the symmetric, metric traveling salesperson problem is hard to approximate with ratio better than $\frac{123}{122}$. For a reduction, we take any metric, undirected graph $H$, duplicate the vertices and put a zero-length edge and a single message between each of them, just like in Theorem~\ref{thm:singleagenthardness} / Figure~\ref{fig:hamilton-cycle-reduction2}.
	To find a suitable length $x$ for the extra edge $(p_1,p_1')$ at the (arbitrary) starting vertex we have to consider the following: In a traveling salesman tour in $H$, we want the agent to come back to $p_1$ in the end.
	Hence in \ourProblem on $G$ we want the agent to end at $p_1'$. We achieve this by setting $x$ large enough to avoid traveling $(p_1,p_1')$ twice.
	Let $M$ be the cost of a minimum spanning tree in $H$.
	Clearly, both the optimum traveling salesman path and the optimum traveling salesman tour have cost at least $M$ but also at most $2M$.
	Hence, setting the length $x$ of the extra edge $(p_1,p_1')$ to $2M$ ensures that any schedule for \ourProblem on $G$ which doesn't end in $p_1'$ (and thus uses $(p_1,p_1')$ twice) has cost at least $2\cdot 2M+M=5M$, 
	while an optimum schedule has delivery cost at most $2M+2M=4M$. 
	It remains to look at schedules which end in $p_1'$: 
	As the extra edge contributes at most two thirds of the cost of the final schedule, at least one third of the approximation gap is conserved, giving $1 + \frac{1}{3\cdot122} = \frac{367}{366}$.
\end{proof}

\restrictedplanning*

\begin{proof}(\mathversion{bold}$\kappa=\infty$\mathversion{normal}).
	By the given restriction, separate planning of each $S^R|_{a_j}$ independently maintains feasibility of $S^R$. We denote by $m_{j1}, m_{j2}, \ldots, m_{jx}$ the messages appearing in $S^R_{a_j}$. 
	We define a complete undirected auxiliary graph $G'=(V',E')$ on the node set 
	$V' = \left\{ p_j \right\} \cup \left\{ s_{j1}, s_{j2}, \ldots, s_{jx} \right\} \cup \left\{ t_{j1}, \ldots, t_{jx} \right\}$ with edges $(u,v)$ having weight $d_G(u,v)$.

	For $\underline{\kappa=\infty}$, the schedule $\OPT(S^R)|_{a_j}$ corresponds to a Hamiltonian path $H$ in $G'$ of minimum length, starting in $p_j$, 
	subject to the condition that for each message $m_{ji}$ its source $s_{ji}$ is visited before its destination $t_{ji}$.
	We approximate $\OPT(S^R)|_{a_j}$ with a schedule $\ALG(S^R)|_{a_j}$ that first collects all messages $m_{j1}, \ldots, m_{jx}$ before delivering all of them. 
	We start by computing a minimum spanning tree $MST'$ of the subgraph of $G'$ consisting of all nodes $\left\{ p_j \right\} \cup \left\{ s_{j1}, \ldots, s_{jx} \right\}$.
	A DFS-traversal on $MST'$ starting from $p_j$ collects all messages, returns back to $p_j$ and has cost $2\cdot \sum_{e \in E(MST')} l_e \leq 2\cdot \cost(\OPT(S^R)|_{a_j})$. 
	Next we consider the subgraph of $G'$ consisting of all nodes $\left\{ p_j \right\} \cup \left\{ t_{j1}, \ldots, t_{jx} \right\}$.
	Using Christofides' heuristic for the metric TSP path version with fixed starting point $p_j$ and arbitrary endpoint, due to Hoogeveen~\cite{Christofides76, Hoogeveen91}, 
	we can deliver all messages by additionally paying at most $\tfrac{3}{2}\cdot \cost(\OPT(S^R)|_{a_j})$. In total, this gives a $3.5$-approximation.
\end{proof}

\section{Appendix: Coordination}
\label{app:coordination}

\begin{restatable}[Energy cost of a 3SAT-solution]{lem}{solutioncostlemma}
	Given a satisfiable assignment (a solution) for the variables of a 3CNF $F$ there is a feasible schedule $\SAT$ of the agents in $G(F)$ which
	has a total (energy) cost of $\cost(\SAT) = 4xy + 2y + x(y^2+y+1)\varepsilon$.
	\label{lem:optimum-cost}
\end{restatable}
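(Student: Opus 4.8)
The plan is to construct the schedule $\SAT$ directly from the satisfying assignment and then account for its cost term by term. First I would route, for each variable $u$, its weight-$1$ variable agent onto the path matching $u$'s truth value (the \emph{true}-path if $u$ is true, the \emph{false}-path otherwise); after paying $\varepsilon$ to step from $v$ onto this \emph{active} path, the agent sweeps along it and, carrying one message at a time, delivers all $2y$ literal messages on the active path in a single pass (dropping each at its target, which is the source of the next one, so that no empty travel occurs). On the opposite, \emph{inactive} path I would instead let the $y$ literal agents deliver the messages themselves: the agent of weight $1+i\varepsilon$ sitting at $v_{\ast,2(y-i)}$ picks up the two consecutive messages starting at its own node and at the next node, again without any idle movement.

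Since the variable agent has taken over every literal message on the active path, all $y$ literal agents on that path are now \emph{free}. For the clause messages I would use the fact that the assignment satisfies $F$: every clause $c$ contains a true literal $\ell$, so the literal agent adjacent to $c$ on the corresponding active path is free. I would send exactly one such agent per clause along the clause edge to $c$, pick up the clause message, and carry it to $c'$. On inactive paths the literal agents remain occupied and are never sent to a clause, matching the intended semantics of the reduction. I would then check feasibility: capacity~$1$ (indeed any $\kappa$) is respected since every agent carries at most one message at a time, each message is picked up at its own source, and distinct tasks are handled by distinct agents, so the individual action sequences can be merged into one feasible schedule.

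The cost then splits into three independent contributions. The $4xy$ literal messages are each carried along a unit edge, contributing $4xy$ to the integer part regardless of the carrier. For a clause served by the agent of weight $1+i\varepsilon$, the move to $c$ costs $(1+i\varepsilon)\cdot\tfrac{1-i\varepsilon}{1+i\varepsilon}=1-i\varepsilon$ and carrying the message from $c$ to $c'$ costs $1+i\varepsilon$, so each of the $y$ clauses contributes exactly $2$, i.e. $2y$ in total with the $\varepsilon$-terms cancelling. Finally the $\varepsilon$-part: on each inactive path agent $i$ carries two messages for a contribution of $2i\varepsilon$, summing to $\sum_{i=1}^{y}2i\varepsilon=y(y+1)\varepsilon$ per variable, while the variable agent's single step onto its active path adds $\varepsilon$; over all $x$ variables this gives $x\bigl(y(y+1)+1\bigr)\varepsilon=x(y^2+y+1)\varepsilon$. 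Adding the three pieces yields $\cost(\SAT)=4xy+2y+x(y^2+y+1)\varepsilon$, as claimed.

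I expect the main obstacle to be the bookkeeping that makes the formula come out \emph{exactly} rather than up to lower-order terms: one must confirm that each path carries precisely $2y$ messages, that the variable agent indeed frees \emph{all} literal agents on its active path (so the adjacent agent of any satisfied clause is available), and above all that the clause-edge length $\tfrac{1-i\varepsilon}{1+i\varepsilon}$ is exactly what cancels the $\varepsilon$-overhead of the heavier delivering agent. Getting these constants to line up---and verifying that no agent is forced into idle travel---is the delicate part; the rest is a routine summation.
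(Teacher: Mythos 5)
Your proposal is correct and follows essentially the same construction and accounting as the paper's proof: the variable agent sweeps the active path (cost $2y+\varepsilon$), the literal agents on the inactive path each deliver two consecutive messages (cost $2(1+i\varepsilon)$ each), and a freed literal agent adjacent to each satisfied clause delivers its clause message at cost exactly $(1+i\varepsilon)\bigl(\tfrac{1-i\varepsilon}{1+i\varepsilon}+1\bigr)=2$. The only difference is presentational --- you group the total by integer and $\varepsilon$-parts, whereas the paper sums per variable box and per clause --- and both yield $4xy+2y+x(y^2+y+1)\varepsilon$.
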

%

\begin{proof} We are given $x$ variables, $y$ clauses and a satisfiable assignment of the variables. We construct a schedule from
	the assignment as follows: Assume that variable $v$ is set to true and consider the corresponding variable box. 
	\begin{itemize}
		\item	The variable agent (which has weight 1) delivers all messages on the full $\mathit{true}$-path (which has length
			$2y+\varepsilon$).  The energy needed to do so is $2y+\varepsilon$.
		\item	Each literal agent placed on a node $v_{\mathit{false},i}$ transports two messages: the message with source
			$v_{\mathit{false},i}$ and the message with source $v_{\mathit{false},i+1}$. Summing over all messages on the
			$\mathit{false}$-path we need an energy of $2\cdot \left( (1+y\varepsilon) +\ldots + (1+ \varepsilon) \right)$.
	\end{itemize}
	Hence for the messages in each of the $x$ variable boxes we have an energy consumption of
	$2y+\varepsilon+2y+2\sum_{i=1}^y{i\varepsilon} = 4y+(y^2+y+1)\varepsilon$. Furthermore, since we start from a satisfiable assignment, the
	source of each clause message is connected to at least one not yet used agent of weight $1+i\varepsilon$. Such an agent is adjacent to the
	source of that clause message only and can move to the source (distance $\tfrac{1-i\varepsilon}{1+i\varepsilon}$), pick it up and deliver it
	(distance $1$), hence for each clause we get an energy consumption of exactly $(1+i\varepsilon)( \tfrac{1-i\varepsilon}{1+i\varepsilon} +1 ) =
	2$.
	Summed over all variables and all clauses we get a total energy consumption of $x \cdot \left( 4y+(y^2+y+1)\varepsilon \right) + 2y = 4xy + 2y
	+ x(y^2+y+1)\varepsilon$. 
\end{proof}

\subparagraph{Fixed sequence (schedule without agent assignment).} We now fix a sequence $S^-$ that describes the schedule constructed in Lemma~\ref{lem:optimum-cost} but which does not allow us to infer a satisfiable assignment. 
In our sequence $S^-$ every pick-up action of one of the $4xy+y$ many messages $m_i$ at its location $s_i$ is immediately followed by its drop-off at its destination $t_i$. 
Hence we are restricted to a schedule
$(\text{\textunderscore}, s_{\pi(1)}, m_{\pi(1)}, +), (\text{\textunderscore}, t_{\pi(1)}, m_{\pi(1)}, -), (\text{\textunderscore}, s_{\pi(2)}, m_{\pi(2)}, +), \ldots,$ $ (\text{\textunderscore},t_{\pi(4xy+y)}, m_{\pi(4xy+y)}, -)$,
where $\pi$ can be any permutation on $1,\ldots, 4xy+y$ that satisfies the following property:
If two messages $m_i,m_j$ lie on the same $\mathit{true}$- or $\mathit{false}$-path originating at a variable node $v$, then $d_G(F)(v, s_i) < d_G(F)(v, s_j) \Rightarrow \pi^{-1}(i) < \pi^{-1}(j)$
(meaning if $m_i$ lies to the left of $m_j$, it should precede $m_j$ in the schedule).

\subparagraph{Energy Consumption of Optimal Schedules.} In the next three Lemmata~\ref{lem:variablebox-independence},~\ref{lem:agent-movement}
and~\ref{lem:optimum-schedule-cost} we show that (i) the total energy consumption of any optimum schedule \OPT\ is $\cost(\OPT) \geq \cost(\SAT)$,
and (ii) in every optimum schedule with $\cost(\OPT) = \cost(\SAT)$, each variable agent delivers exactly all messages on either the $\mathit{true}$- or the
$\mathit{false}$-path. 
We remark that (i) is true independent of whether $\OPT$ adheres tho the fixed schedule without assignments $S^-$ or not and holds for any capacity $\kappa$. 
In the case of (ii), $\OPT$ has exactly the properties as described in the proof of Lemma~\ref{lem:optimum-cost}.
Since thus for each agent $a_j$ the subsequence $\OPT|_{a_j}$ is uniquely determined, and since each message is transported by a single agent (without intermediate drop-offs),
these subsequences can be merged to match the prescribed fixed order of the actions $S^-$.

\begin{figure}[t!]
	\centering
	\includegraphics[width=\linewidth]{weight-hardness.pdf}
	\caption{Agent positions ($\square$) and weights (in black); Messages ($\rightarrow$) and edge lengths (in color).}
	\label{fig:weight-hardness2}
\end{figure}

\subparagraph{Lower Bound.} To this end we first give an upper bound $\mathit{UB}$ on $\cost(\SAT)$ and a lower bound $\mathit{LB}$ on the total energy
consumption of any feasible schedule $S$. First, since $\varepsilon = (8xy)^{-2}$, we define $\mathit{UB} := 4xy + 2y + 0.25 > 4xy + 2y +
x(y^2+y+1)\varepsilon = \cost(\SAT)$. For the lower bound, note that every agent has weight at least 1. We double count the distance traveled by the
agents via the distance covered by the messages, hence we have to be careful to take into account that an agent might carry two or more messages at
the same time (we do not want to count the energy used during that time twice or more). Hence for each of the $4xy + y$ messages we count the last
edge over which it is transported \emph{towards} its target. All message targets (and thus the distance traveled towards them) are disjoint and have
adjacent edges all of length at least $\tfrac{1-y\varepsilon}{1+y\varepsilon} > 1-2y\varepsilon > 1-\tfrac{1}{32xy}$. Additionally, before an agent
can deliver a \emph{clause message}, it needs first to travel towards its source. Such edge crossings are not counted yet, hence we can add an
additional distance of $\tfrac{1-y\varepsilon}{1+y\varepsilon}$ for each clause. Overall we get a lower bound on the total energy consumption of
$\mathit{LB} := 4xy + 2y - 0.25 < (4xy + y + y)\cdot ( 1 -\tfrac{1}{32xy} )$.

\begin{restatable}[Independence of variable boxes]{lem}{boxindependencelemma}
	Any optimum schedule \OPT, in which an agent placed in a variable box $u$ moves to a clause node and back into a variable box of some 
	(not necessarily different) variable $v$, has $\cost(\OPT) > \cost(\SAT)$.
	\label{lem:variablebox-independence}
\end{restatable}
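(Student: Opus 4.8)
The plan is to pin $\cost(\OPT)$ between the two quantities already introduced: the upper bound $\mathit{UB} := 4xy + 2y + 0.25 > \cost(\SAT)$ and the lower bound $\mathit{LB} := 4xy + 2y - 0.25$. Recall that $\mathit{LB}$ is obtained by a \emph{charging argument}: one exhibits a collection of pairwise distinct edges that every feasible schedule must traverse at least once (one arrival edge at each message target, plus one source-approach edge incident to each clause node $c$), each of length at least $\tfrac{1-y\varepsilon}{1+y\varepsilon} > 1 - \tfrac{1}{32xy}$, whose total length exceeds $\mathit{LB}$. The key idea is that the forbidden excursion forces an agent to traverse \emph{additional} distance beyond this charged collection, and that this extra distance exceeds the slack $\mathit{UB} - \mathit{LB} = 0.5$. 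Since every agent has weight at least $1$, the total distance $D$ is a lower bound on $\cost(\OPT)$, so pushing $D$ above $\mathit{UB}$ yields $\cost(\OPT) > \mathit{UB} > \cost(\SAT)$.

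First I would make the charging collection and its traversal multiplicities explicit, noting that for each clause exactly \emph{one} approach edge incident to $c$ is charged, and that the edge $(c,c')$ is charged only as the target arrival edge of the clause message, never as an approach edge. Second, I would analyse the excursion itself: the agent enters the clause node $c$ from variable box $u$ along some approach edge $e_1$ and later leaves $c$ towards variable box $v$ along some approach edge $e_2$, each of length at least $\tfrac{1-y\varepsilon}{1+y\varepsilon}$. Because at most one approach edge incident to $c$ is charged in the construction of $\mathit{LB}$, at least one of these traversals is uncharged: if $u \neq v$ then $e_1 \neq e_2$ and at most one of them equals the charged edge, whereas if $u = v$ the agent crosses the same edge twice while it is charged only once. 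Third, I would conclude $D \geq (\text{total charged length}) + \tfrac{1-y\varepsilon}{1+y\varepsilon} > \mathit{LB} + \tfrac12 = \mathit{UB}$, using that $\tfrac{1-y\varepsilon}{1+y\varepsilon} > 1 - \tfrac{1}{32xy} > \tfrac12$ for all $x,y \geq 1$.

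The main obstacle will be the disjointness and multiplicity bookkeeping needed to guarantee that the extra approach edge is genuinely \emph{uncharged}, rather than an edge already accounted for in $\mathit{LB}$. Concretely, I must verify that the first arrival of any agent at $c$ is necessarily along an approach edge and not from the dead-end $c'$ (so the charged approach edge is real), that no clause contributes more than one charged approach edge, and that the degenerate case $u = v$ with $e_1 = e_2$ is covered by the double-traversal argument rather than by a second distinct edge. Once this accounting is clean, the remaining arithmetic comparing $1 - \tfrac{1}{32xy}$ against the slack $\tfrac12$ is routine and completes the proof.
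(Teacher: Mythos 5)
Your overall strategy is the paper's: exploit the slack $\mathit{UB}-\mathit{LB}=0.5$ and exhibit one traversal of length $>0.5$ that the charging argument behind $\mathit{LB}$ did not count. The gap is in your bookkeeping step. Your case analysis ($u\neq v$ versus $u=v$, edge identities) rests on the claim that at most one approach edge incident to $c$ carries a charge, so that one of the excursion agent's two approach-edge traversals must be uncharged. But the charges in $\mathit{LB}$ are assigned to traversal events of the given schedule, not to edges fixed in advance, and an approach edge can perfectly well receive an \emph{arrival} charge: the endpoint $z$ of an approach edge $(c,z)$ is itself the target of a literal message $m_z$, and a feasible schedule may route $m_z$ through $c$, making $(c,z)$ the last edge over which $m_z$ travels towards its target. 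Concretely, consider a schedule in which the excursion agent picks up $m_z$ at its source, carries it over $z$ to $c$, drops it, delivers the clause message (traversing $c\to c'\to c$), and exits carrying $m_z$ from $c$ to $z$. Here \emph{both} of its approach-edge traversals are charged: the entry as the clause-approach charge (it is the delivering agent) and the exit as the arrival charge of $m_z$. So the verification you flagged as ``no clause contributes more than one charged approach edge'' is not a routine check — it is false, in both your $u=v$ and your $u\neq v$ variants. Nor may you simply exclude such schedules: the lemma is precisely about ruling out strange behaviour of optimum schedules by a cost bound, so the bound must hold for arbitrary feasible schedules containing the excursion.

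The repair is the paper's case split, which keys on whether the excursion agent delivers the clause message rather than on edge identities. If it does, then to return to a variable box it must traverse $(c,c')$ and then $(c',c)$, and the return traversal of length $1$ is uncharged (it ends at $c$, which is no message's target, and the clause-approach charge is the pre-delivery entry). If it does not, then its entry into $c$ is uncharged: it is not an arrival (again, $c$ is no target), and the clause-approach charge belongs to the delivering agent's entry, a different event. Either way one gets an uncharged traversal of length at least $\tfrac{1-y\varepsilon}{1+y\varepsilon}>0.5$, and the arithmetic $\cost(\OPT)>\mathit{LB}+0.5=\mathit{UB}>\cost(\SAT)$ that you set up goes through. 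Note that in the counterexample above your conclusion still holds — the uncharged traversals are just elsewhere (the return from $c'$, and the non-final move of $m_z$ towards $z$) — which is why the lemma is true even though your particular accounting cannot certify it.
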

\begin{proof} 
	Assume for the sake of contradiction that an agent leaves its variable box, walks to a clause node and later on back to a variable box. If
	this agent delivers the clause message, it has to return and thus walks the corresponding clause message's distance of $1$ twice. If it doesn't,
	then we haven't included the travel \emph{towards} the clause node yet. In both cases, we add at least another $1 -\tfrac{1}{32xy} > 0.5$ to
	the given lower bound $\mathit{LB}$, yielding $\cost(\OPT) > \mathit{LB} + 0.5 = \mathit{UB} > \cost(\SAT)$.
\end{proof}

From now on we can restrict ourselves to look at feasible schedules where each each agent either stays inside its variable box, or walks to deliver a
clause message and stays at the target of that clause message. Next we show that we can also assume that agents walk only from left to right inside a
$\mathit{true}$- or $\mathit{false}$-path:

\begin{restatable}[Agents move from left to right]{lem}{lefttorightlemma}
	Any optimum schedule \OPT, in which there is an agent $a$ that moves at some point in the schedule from right to left along a $\mathit{true}$- or
	$\mathit{false}$-path, has $\cost(\OPT) > \cost(\SAT)$.
	\label{lem:agent-movement}
\end{restatable}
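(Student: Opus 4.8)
The plan is to follow the template of the proof of Lemma~\ref{lem:variablebox-independence}, reusing the lower bound $\mathit{LB} = 4xy + 2y - 0.25$ on the energy of \emph{any} feasible schedule and the upper bound $\mathit{UB} = 4xy + 2y + 0.25 > \cost(\SAT)$. Recall how $\mathit{LB}$ is assembled: since every agent has weight at least $1$, counting edge-traversal lengths already lower-bounds the energy, and $\mathit{LB}$ counts, for each of the $4xy+y$ messages, the last edge entering its target, and for each of the $y$ clauses, the edge approaching its source. Every counted traversal moves \emph{towards} a target or a clause source; in particular none of them is a leftward traversal of one of the length-$1$ edges $\{v_{*,i},v_{*,i+1}\}$ that make up a $\mathit{true}$- or $\mathit{false}$-path, each of which carries exactly one literal message rightward. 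The idea is therefore to show that a single right-to-left move along such a path produces one extra length-$1$ traversal not accounted for by $\mathit{LB}$, which raises the cost by at least $1$ and yields $\cost(\OPT) \geq \mathit{LB} + 1 = 4xy + 2y + 0.75 > \mathit{UB} > \cost(\SAT)$.

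Concretely, suppose some agent traverses a length-$1$ path edge $e = \{v_{*,i}, v_{*,i+1}\}$ leftward, entering its left endpoint $u = v_{*,i}$. I would argue by cases. If this leftward crossing is not the last edge used to deliver the message whose target is $u$, then it is a traversal disjoint from all $\mathit{LB}$-counted ones and contributes an additional length of $1$. Otherwise the message with target $u$ is delivered into $u$ from the right across $e$; but its source $v_{*,i-1}$ lies to the left of $u$, so the message --- and hence the agent carrying it --- must earlier have crossed $e$ rightward, and that rightward crossing is now the uncounted extra traversal of length $1$. (When $u$ is not a target, for instance the endpoint adjacent to the variable node, only the first case can occur.) In both cases the schedule contains at least one length-$1$ path-edge traversal beyond those counted in $\mathit{LB}$, which completes the estimate above.

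The step I expect to be the main obstacle is the second case: ruling out that the leftward crossing is a ``free'' final delivery. This rests on the observation that the two endpoints of a path edge are joined cheaply only by that edge itself --- any alternative route must leave the variable box and pass through clause nodes, at a cost far exceeding $1$ --- so a message that reaches its target from the wrong side genuinely pays for an extra rightward crossing. Once this is in place, the remaining bookkeeping is identical to that of Lemma~\ref{lem:variablebox-independence}, and the strict inequality $\cost(\OPT) > \cost(\SAT)$ follows.
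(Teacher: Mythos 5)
There is a genuine gap, and it is exactly the case that forces the paper's proof to have a second, non-counting phase. Your argument only detects right-to-left moves that traverse a \emph{complete} length-$1$ source--target edge of a $\mathit{true}$- or $\mathit{false}$-path. But the lemma must exclude \emph{any} right-to-left movement, however short: an agent may backtrack by some small $\delta$ to or from a handover point lying \emph{inside} an edge $(s_i,t_i)$ (the paper's own proof explicitly reasons about such interior handover points $h_i$, $h_j$), it may make a partial excursion into an edge and turn around, or it may move leftward across one of the length-$\varepsilon$ edges at the variable node. In all these situations the uncounted extra distance is only $\delta$ (or $\varepsilon$), while the slack available to the counting argument is $\mathit{UB}-\mathit{LB}=0.5$ and $\cost(\SAT)-\mathit{LB}<0.5$; so no conclusion of the form $\cost(\OPT)>\cost(\SAT)$ follows. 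The paper therefore splits the proof: if the total right-to-left movement of some agent is at least $0.5$, the counting argument you propose works verbatim; if it is less than $0.5$, the paper instead derives a \emph{contradiction with optimality} by a local exchange --- either replacing the carrying agent by the agent $b$ that picks up at the interior handover point $h_i$ (whose start lies on or left of $s_i$), or shifting a handover point $h_j$ rightward by $\delta$, using the weight relation $w_b<2w_a$ guaranteed by the construction. Your proposal has no analogue of this second phase, and it is the heart of the lemma; it is also what later justifies the blanket assumption ``agents do not move from right to left at all'' in Lemma~\ref{lem:optimum-schedule-cost}.

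A secondary, more local problem: even in the full-edge case your identification of the uncounted traversal can fail. In your second case, if one agent simultaneously carries the message with target $u=v_{*,i}$ and the message with target $v_{*,i+1}$ rightward across $e$ (the capacity $\kappa$ is arbitrary in this section), then that rightward crossing of $e$ \emph{is} counted in $\mathit{LB}$, namely as the final delivery move of the latter message; the genuinely uncounted traversal is the earlier rightward crossing of $\{v_{*,i-1},u\}$, not the crossing of $e$. This particular slip is patchable by tracking that edge instead, but it illustrates that the disjointness bookkeeping behind $\mathit{LB}$ is more delicate than the proposal assumes, and it does not repair the main gap above.
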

\begin{proof} 
	By Lemma~\ref{lem:variablebox-independence} we can restrict ourselves to schedules \OPT\ where each message $i$ has to be transported over the
	edge connecting $s_i$ and $t_i$. Without loss of generality we assume that no message $i$ ever leaves the interval $\left[ s_i, t_i \right]$
	and that it is transported from $s_i$ to $t_i$ monotonically from left to right (otherwise we could adjust the schedule accordingly by keeping
	the trajectories of the agents but adapting pick-up and drop-off locations).

	First assume for the sake of contradiction that in \OPT\ there is an agent $a$ whose trajectory contains moves from right to left of total
	length at least $0.5$. The energy needed to do this is at least $0.5$, and these moves are not yet included in the lower bound $\mathit{LB}$. As
	before, by adding $0.5$, we get $\cost(\OPT) > \mathit{LB} + 0.5 = \mathit{UB} > \cost(\SAT)$. Hence in the following we assume that each agent moves strictly
	less than $0.5$ from right to left.

	We are going to show that any such schedule \OPT\ can be transformed into a schedule of smaller cost, contradicting the optimality of \OPT. 
	Consider the longest consecutive right-to-left move of agent $a$:
	Since $a$ moves by less than $0.5$ to the left, it must come from a handover $h_i$ point lying inside an edge $(s_i, t_i)$ or go to a handover
	point $h_j$ lying inside an edge $(s_j, t_j)$.

	In the first case, $h_i$ is closer to $s_i$ than to $t_i$ and the previous action of $a$ in the schedule must have been $(a, h_i, i)$. The
	agent $b$ picking up message $i$ at $h_i$ must have its starting position on or to the left of $s_i$. Hence we could replace $a$'s pick-up and
	drop-off of message $i$ with pick-up and drop-off by agent $b$, thus strictly decreasing the total distance $d_a$ traveled by $a$,
	contradicting the optimality of $OPT$.
	In the second case, $h_j$ is closer to $t_j$ than to $s_j$ and agent $a$ moves to the right after picking up message $j$. Let $b$ denote the
	agent that dropped off $j$ at $h_j$. By the previous remarks we know that $b$ will not move to the left in its next action (if any),
	$b$ can't reach $s_j$ and no other message is inside $\left[ s_i, t_i \right]$.
	Furthermore by the weights given in our hardness reduction we know $w_b < 2w_a$. Therefore we can move $h_j$ by a small $\delta > 0$ to the right,
	thus strictly decreasing $w_a \cdot d_a + w_b \cdot d_b$, contradicting the optimality of $\OPT$.
\end{proof}

From now on we assume that agents do not move from right to left at all. Now we are ready to prove the key relation between optimum schedules and
\SAT\ schedules:
\begin{lemma}[Energy cost of an optimum schedule]
	Any optimum schedule \OPT\ either has total energy consumption $\cost(\OPT) > \cost(\SAT)$ or $\cost(\OPT) = \cost(\SAT)$. In the latter case, each
	variable agent either delivers exactly all messages on its $\mathit{true}$-path or exactly all messages on its $\mathit{false}$-path. Furthermore the literal
	agents on the respective other path deliver exactly two literal messages each. Finally, clause messages are only delivered by freed up literal
	agents on the paths chosen by the variable agents.
	\label{lem:optimum-schedule-cost}
\end{lemma}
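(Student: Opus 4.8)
The plan is to turn the crude counting behind the bound $\mathit{LB}$ into an exact per-box accounting whose total matches $\cost(\SAT)$, and then read off the equality structure. First I would dispose of the non-nice schedules: if $\OPT$ violates box independence (Lemma~\ref{lem:variablebox-independence}) or moves some agent from right to left along a path (Lemma~\ref{lem:agent-movement}), then already $\cost(\OPT) > \cost(\SAT)$ and we are done. So assume every agent moves only left to right and either stays inside its own variable box or leaves exactly once to deliver a single clause message and then remains at that clause target. Since the cycle edges $A_2$ were deleted, different variable boxes are linked only through clause nodes; together with box independence this means the $4y$ literal edges of a box $B$ can be carried (while holding their message) only by agents starting in $B$. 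Hence the cost decomposes as $\cost(\OPT) = \sum_B \cost_B$, where $\cost_B$ gathers the cost of all agents originating in $B$.

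Next I would lower-bound $\cost_B$. The key structural observation is that each literal agent does exactly one of two things: it carries literal messages by moving right from its fixed start node, or it uses its unique special edge to a clause node and is then stuck at $c'$; it cannot do both, since both actions originate at the same fixed node and are incompatible with moving only left to right. The variable agent, starting at $v$, can enter only one of the two paths (re-entering $v$ would be a right-to-left move), hence serves at most one path. To bound the cost of covering the $4y$ literal edges, consider the path \emph{not} served by the variable agent: an edge $[j,j+1]$ can be carried only by a literal agent whose start lies at or left of $j$, and the cheapest such agent has weight $1+(y-\lfloor j/2\rfloor)\varepsilon$; assigning every edge to its cheapest eligible agent is both feasible and forced for the minimum, giving cost at least $2y + \varepsilon\,y(y+1)$, with equality iff each of the $y$ agents carries exactly its two right-neighbour edges. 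On the path served by the variable agent, the $2y$ edges cost at least $2y$ (all weights $\geq 1$) plus the entry edge $\varepsilon$, and a short comparison shows that letting the weight-$1$ agent cover that whole path beats any alternative. Summing, the literal-covering cost of $B$ is at least $4y+(y^2+y+1)\varepsilon$, and since removing agents (those diverted to clauses) can only raise a covering minimum, this bound holds unconditionally.

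It remains to add the clause messages and conclude. Each clause message traverses its length-$1$ target edge, and the delivering agent must first reach the clause node through a special edge of length $\tfrac{1-i\varepsilon}{1+i\varepsilon}$; any agent capable of this is a literal agent of weight $1+i\varepsilon$, so the cost is at least $(1+i\varepsilon)\bigl(\tfrac{1-i\varepsilon}{1+i\varepsilon}+1\bigr)=2$, and the $y$ clause messages cost at least $2y$ in total. Therefore $\cost(\OPT)=\sum_B\cost_B \geq x\bigl(4y+(y^2+y+1)\varepsilon\bigr)+2y=\cost(\SAT)$. For the equality case, every inequality must be tight: tightness of the literal bound forces, in each box, the variable agent to deliver one entire path and the $y$ literal agents of the other path to deliver exactly two messages each (by uniqueness of the cheapest-eligible assignment); no literal agent of that other path may be diverted to a clause, since that would remove it from the covering and strictly raise the literal cost, so every clause is delivered by a literal agent of the path chosen by its variable agent, i.e.\ by an agent freed up by the variable agent; and tightness of the clause bound forces each such delivery to use a connected special edge at cost exactly $2$. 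This is precisely the claimed structure.

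The main obstacle is the per-path assignment argument and its uniqueness at the $\varepsilon$ scale: proving that the cheapest way to carry the $2y$ messages of a path with the position-pinned, left-to-right literal agents is exactly the two-messages-per-agent pattern, and that diverting any covering agent to a clause strictly increases cost. This is where the carefully chosen weights $1+i\varepsilon$ and edge lengths $\tfrac{1-i\varepsilon}{1+i\varepsilon}$ do the work, and where the bookkeeping must be exact rather than up to the $0.25$ slack that sufficed in Lemmas~\ref{lem:variablebox-independence} and~\ref{lem:agent-movement}.
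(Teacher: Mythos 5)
Your overall architecture (reduce to ``nice'' schedules via Lemmata~\ref{lem:variablebox-independence} and~\ref{lem:agent-movement}, then prove a refined lower bound whose equality case forces the claimed structure) parallels the paper's proof, and your cheapest-eligible-agent accounting for the path \emph{not} served by the variable agent is essentially the paper's bound $\mathit{LB}'$. But there is a genuine gap in the clause accounting. You claim that any agent that can reach a clause node through a special edge ``is a literal agent of weight $1+i\varepsilon$,'' so each clause costs at least $(1+i\varepsilon)\bigl(\tfrac{1-i\varepsilon}{1+i\varepsilon}+1\bigr)=2$. This is false: the \emph{variable} agent (weight $1$) can enter its path, walk to the clause-adjacent node $v_{\mathit{true},2(y-i)}$, cross the special edge and deliver the clause, paying only $\tfrac{1-i\varepsilon}{1+i\varepsilon}+1=\tfrac{2}{1+i\varepsilon}<2$ for the special and target edges. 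Since your lower bound is an additive separation of per-box covering costs and per-clause costs, and the weak bound $2y+\varepsilon$ on the served path is unaffected by this diversion, your claimed total drops below $\cost(\SAT)$ by up to $\Theta(y\varepsilon)$ per diverted variable agent, and the inequality $\cost(\OPT)\geq\cost(\SAT)$ no longer follows.

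Such schedules \emph{are} strictly suboptimal, but showing it requires exactly the joint analysis your separation discards: when the variable agent deviates at index $i$, the $2i$ path edges to its right must be carried by literal agents at excess cost $i(i+1)\varepsilon$ over the weak bound, which outweighs the clause deficit $\tfrac{2i\varepsilon}{1+i\varepsilon}<2i\varepsilon$ --- with a margin of only $O(\varepsilon^2)$ when $i=1$. The paper handles precisely this case by an explicit exchange argument \emph{before} setting up the refined bound: swap the variable agent with the literal agent $b$ (weight $1+j\varepsilon$) carrying the message at the deviation node; the clause delivery gets more expensive by $j\varepsilon\cdot\tfrac{2}{1+i\varepsilon}<2j\varepsilon$ while the two unit-length literal messages get cheaper by $2j\varepsilon$, contradicting optimality. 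Your ``a short comparison shows that letting the weight-$1$ agent cover that whole path beats any alternative'' gestures at this but cannot be waved through, since the comparison is not between the path costs alone --- it must include the clause delivery, and it is decided at second order in $\varepsilon$. A secondary, smaller error: your claim that a literal agent ``cannot do both'' (carry messages and deliver a clause) is wrong, since special edges also exist at clause-adjacent nodes to the \emph{right} of an agent's start; this one is harmless, because such a delivery uses an agent of weight $1+i'\varepsilon$ with $i'>i$ and costs strictly more than $2$.
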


\begin{proof} 
	By Lemmata~\ref{lem:variablebox-independence} and~\ref{lem:agent-movement} we may assume that no agent travels into another variable box and
	that agents only move from left to right on any $\mathit{true}$- or $\mathit{false}$-path. Furthermore we have seen in the proof of Lemma~\ref{lem:agent-movement}
	that this implies that every literal message is carried from its source to its target by a single agent in a continuous left-to-right motion.
	We now show that if there was a variable agent $a$ which does not deliver either all messages on its adjacent $\mathit{true}$-path or its adjacent $\mathit{false}$-path, 
	then we would get a contradiction to optimality:
	
	Assume first for the sake of contradiction that $a$ stays on its starting location. Then we can move $a$ to the first internal node of its
	$\mathit{true}$-path, which contributes an additional $\varepsilon$-distance to the total energy consumption. Let $b$ be the agent carrying the first
	literal message; we know that $b$ must have weight $1+y\varepsilon$. We replace $b$ in the schedule by $a$, saving at least $y\varepsilon$
	energy already on the first literal message, contradicting the optimality of the schedule.
	Now assume that $a$ either deviated at some internal node $v_{\mathit{true},2(y-i)}$ of the path it entered (to deliver a clause message) or that it 
	stopped at such an internal node ($v_{\mathit{true},2(y-i)}$ or $v_{\mathit{true},2(y-i)+1}$). 
	Let $b$ denote the agent carrying the message which was placed on the specified node. The edge to the adjacent clause (if any) has length
	$\tfrac{1-i\varepsilon}{1+i\varepsilon}$ and $b$ has weight $1+j\varepsilon$, with $j \geq i$. Now we can switch $a$ and $b$ in the remainder
	of the schedule. The potential increase of energy cost (on the clause message delivery) amounts to at most $((1+j\varepsilon)-1)\cdot
	(\tfrac{1-i\varepsilon}{1+i\varepsilon}+1) < 2j\varepsilon$, while the gained energy on the next two literal messages is at least
	$j\varepsilon \cdot 2$, again contradicting the optimality of the schedule.

	Hence each variable agent delivers either all messages on its $\mathit{true}$-path or all messages on its $\mathit{false}$-path. This allows us to give a new
	lower bound $\mathit{LB}'$ on the energy consumption $\cost(\OPT)$: Each variable agent contributes an energy consumption of $2y+\varepsilon$ to the total.
	Delivery of each message on the respective other path needs an agent with starting location coinciding with or to the left of the message
	source, yielding an energy contribution of at least $2\cdot ((1+y\varepsilon)+ \ldots + (1+\varepsilon)) = 2y + y(y+1)\varepsilon$, with
	equality if and only if each literal agent placed on $v_{\mathit{true},i}$ delivers the message placed on 
	$v_{\mathit{true},i}$ \emph{and} the consecutive message with source on $v_{\mathit{true},i+1}$.
	Finally the source of each clause message is reached by an agent of weight $1+j\varepsilon$ over an edge of length
	$\tfrac{1-i\varepsilon}{1+i\varepsilon}$, $j \geq i$, hence the delivery of each clause message needs an energy of at least
	$(1+j\varepsilon)(\tfrac{1-i\varepsilon}{1+i\varepsilon}+1) \geq 2$, with equality if and only if $j = i$. Summing over the clauses and variable
	boxes we get $\mathit{LB}' = 4xy + 2y + x(y^2+y+1)\varepsilon = \cost(\SAT)$.
\end{proof}

It remains to note that the first literal agent (of weight $1+y\varepsilon$) on the path chosen by the variable agent can walk over to the other path
and from there on to deliver a clause message. However such a schedule has higher energy consumption than $\cost(\SAT)$: the literal agent has to
cross two edges of length $\varepsilon$ and the required energy to do so has not been used in the estimated lower bound $\mathit{LB}' = \cost(\SAT)$. 
Hence we conclude:

\messageorderhardnessthm*
%

\section{Appendix: Approximation algorithm}
\label{app:approximation}

\approximationalgorithm*

\begin{proof}
	We start by artificially enlarging the weight $w_j$ of every agent $a_j$ to $\max w_i$. In doing so, we increase the energy cost contribution of each agent $a_j$ by a factor of $\tfrac{w_i}{w_j}$. 
	Thus any $\rho$-approximation to this weight-enhanced problem will give us a $\rho \cdot \max \tfrac{w_i}{w_j}$ approximation for the original problem.

	From now on assume without loss of generality that all agents have a uniform weight $w := \max w_i$. 
	Let $\OPT$ be an optimal schedule for an instance of \ourProblem with uniform agent weights $w$ and capacities $\kappa = 1$.
	We call a feasible schedule $S$ \emph{restricted} if in $S$ every message $m_i$ is transported by a single agent from $s_i$ to $t_i$ without any intermediate drop-offs. 
	By Theorem~\ref{theo-boc-nointermediate} there exists a restricted schedule $S^R$ with $\cost(S^R) \leq 2\cdot \cost(\OPT)$. 
	Let $\OPT^R$ be any optimal restricted schedule with total energy consumption $\cost(\OPT^R) \leq \cost(S^R) \leq 2\cdot \cost(\OPT)$.
	
	We define a complete undirected auxiliary graph $G' = (V',E')$ on all agent starting positions as well as all message sources and destinations, 
	$V' = \left\{ p_1, \ldots, p_k \right\} \cup \left\{ s_1, \ldots, s_m \right\} \cup \left\{ t_1, \ldots, t_m \right\}$. 
	Each edge $e=(u,v) \in E'$ has length $l_e := d_G(u,v)$.
	$\OPT^R$ has a natural correspondence to a vertex-disjoint path cover $PC(\OPT^R)$ of $G'$ with exactly $k$ simple paths $P_1, \ldots, P_k$ that has the following properties: 
	\begin{itemize}
		\item	Each path $P_j$ contains exactly one agent starting position, namely $p_j$;\\
			and $p_j$ is an endpoint of $P_j$.
		\item	Each destination node $t_i$ is adjacent to its source node $s_i$;\\
			and $s_i$ lies between $t_i$ and the endpoint $p_j$. 
	\end{itemize}

	Each path (possibly of length $0$) with endpoint at a starting position $p_j$ corresponds to the (possibly empty) schedule $\OPT^R|_{a_j}$ and $\cost(\OPT^R) = \sum_{e \in PC(\OPT^R)} l_e$.
	Now let $TC^*$ denote a tree cover of minimum total length $\sum_{e \in TC^*} l_e$ among all vertex-disjoint tree covers $TC$ of $G'$ with exactly $k$ trees $T_1,\ldots,T_k$ that satisfy the following properties:
	\begin{itemize}
		\item	Each tree $T_j$ contains exactly one agent starting position, namely $p_j$.
		\item	Each destination node $t_i$ is adjacent to its source node $s_i$.
	\end{itemize}

	Since $PC(\OPT^R)$ itself is a tree cover satisfying the two mentioned properties, we immediately get $\sum_{e \in TC^*} l_e \leq \sum_{e \in PC(\OPT^R)} l_e = \cost(\OPT^R)$. 
	By Theorem~\ref{thm:planning-restricted} we can use DFS-traversals in each component of the optimum tree cover $TC^*$ to construct in polynomial-time a schedule $S^*$
	of total energy consumption $\cost(S^*) \leq 2\cdot \sum_{e \in TC^*} l_e \leq 2\cdot \cost(\OPT^R) \leq 4\cdot \cost(\OPT)$.

	It remains to show that the tree cover $TC^*$ can be found in polynomial time: 
	Analogously to Theorem~\ref{thm:planning-restricted} we start with an empty graph on $V'$ to which we add all edges $\left\{ (s_i, t_i)\ | \ i \in \left\{ 1, \ldots, m \right\} \right\}$.
	Then we add all other edges of $E'$ in increasing order of their lengths, disregarding any edges which would result \emph{either} in the creation of a cycle 
	\emph{or} in a join of two starting positions $p_i, p_j$ into the same tree.
\end{proof}
\input{appendix-other.tex}
\fi

\end{document}